\theoremstyle{plain}
\newtheorem{theorem}{Theorem}
\newtheorem{lemma}{Lemma}
\theoremstyle{definition}
\newtheorem{definition}{Definition}
\DeclareMathOperator*{\argmax}{arg\,max}
\begin{document}

\def\nonext{\hbox{*}}
\def\set1n{\{1,\ldots,n\}}
\def\Naturals{{\bf N}}
\def\Integers{{\bf Z}}
\def\starN{\nonext\Naturals}
\def\Rationals{{\bf Q}}
\def\Reals{{\bf R}}
\def\starR{\nonext\Reals}
\def\starRplus{\nonext \Reals_+}
\def\Rplus{\Reals_+}
\def\Rplusplus{{\Reals_{++}}}
\def\Rk{\Reals^k}
\def\Rdplus{\Reals^d_+}
\def\Rd{\Reals^d}
\def\Rm{\Reals^m}
\def\RJ{\Reals^J}
\def\RJplus{\Reals^J_+}
\def\Rmplus{\Reals^m_+}
\def\RK{\Reals^K}
\def\Rkplus{{\Reals^k_+}}
\def\Rkplusplus{{\Reals^k_{++}}}
\def\Rtwoplus{{\Reals^2_+}}
\def\Rtwoplusplus{{\Reals^2_{++}}}
\def\Deltaplusplus{{\Delta_{++}}}
\def\Deltaplus{{\Delta_{+}}}
\def\starpref{{\nonext \hspace{-.05in} \succ}}
\def\notstarpref{{\nonext \hspace{-.05in} \not\succ}}
\def\stpref{{\st \hspace{-.05in} \succ}}
\def\st{{{}^\circ}}
\def\stin{{\rm st}^{-1}}
\def\Var{{\rm Var}}
\def\ns{{\rm ns}}

\setlength{\parskip}{.15in} 


\title{Detectability, Duality, and Surplus Extraction\thanks{We thank the Editor Tilman B{\" o}rgers and referees for their very helpful comments. Financial support from the NSF under grant SES 1227707 is gratefully acknowledged. Contact: giuseppe.lopomo@duke.edu,  luca@pitt.edu, cshannon@econ.berkeley.edu}}
\date{this version: August 2021}
\author{Giuseppe Lopomo \\ Duke University \and Luca Rigotti \\ University of Pittsburgh \and Chris Shannon \\ UC Berkeley}
\maketitle
\begin{abstract}
We study surplus extraction in the general environment of McAfee and Reny (1992), and provide two alternative proofs of their main theorem. The first is an analogue of the classic argument of Cr{\' e}mer and McLean (1985, 1988), using geometric features of the set of agents' beliefs to construct a menu of contracts extracting the desired surplus. This argument, which requires a finite state space, also leads to a counterexample showing that full extraction is not possible without further significant conditions on agents' beliefs or surplus, even if the designer offers an infinite menu of contracts. The second argument uses duality and applies for an infinite state space, thus yielding the general result of McAfee and Reny (1992). Both arguments suggest  methods for studying surplus extraction in settings beyond the standard model, in which the designer or agents might have objectives other than risk neutral expected value maximization.
  
\end{abstract}

\section{Introduction}

In most settings with asymmetric information, private information generates rents for the agents who hold it. This underlies many central results, such as the inefficiency of outcomes in many mechanisms. A series of important results by Cr{\' e}mer and McLean (1985, 1988) and McAfee and Reny (1992) proved that such rents can be fragile, however, and depend crucially on the assumption that agents' private information is independent. If instead agents' information is correlated, even only to an arbitrarily small degree, then appropriately designed mechanisms can typically leverage this correlation to extract all, or virtually all, information rents. 

Cr{\' e}mer and McLean started this important strand of work. They considered the problem of surplus extraction in two particular settings, the monopolist screening problem (1985), and private values auctions (1988), each with agents whose private information is summarized by finitely many types. McAfee and Reny (1992) instead consider a very general environment, and allow for infinitely many types. McAfee and Reny (1992) show that a natural analogue of Cr{\' e}mer and McLean's convex independence condition on beliefs is necessary and sufficient for virtual surplus extraction in this class of environments, meaning that for every $\varepsilon >0$ a designer can offer a finite menu of contracts leaving each agent with no more than $\varepsilon$ surplus. 

As McAfee and Reny argue, considering the infinite type case is not merely a technical exercise in mathematical completeness, but is of central importance for understanding the explanatory power of the finite model and its ability to approximate the infinite case. Understanding the gap between virtual and full extraction is also valuable because this gap suggests the designer might face a tradeoff between revenue and information goals. When full extraction is possible the designer can typically obtain both agents' surplus and their information, while under virtual extraction the designer typically cannot obtain agents' information precisely. In McAfee and Reny (1992), for example, the designer can recover information up to at most a finite partition of the type space. Although McAfee and Reny's conditions are natural analogues of Cr{\' e}mer and McLean's for the infinite case, their result is in no sense a limit of Cr{\' e}mer and McLean's. In addition, they establish their result by a significantly different argument, from which a connection to the finite case is not clear. Cr{\' e}mer and McLean's result that full extraction holds in the finite case can be proven constructively using the separating hyperplane theorem, while McAfee and Reny's proof that virtual extraction holds in the infinite case relies on their elegant generalization of the classic Stone-Weierstrass approximation theorem. More importantly, from their work and subsequent work, it is unclear whether the gap between virtual and full extraction is due to the restriction to a finite menu of contracts, or is instead an integral consequence of allowing for infinitely many types.  

In this paper we provide two alternative proofs of the main theorem of McAfee and Reny (1992). Each provides insight into connections between the finite and infinite types models, including the extent to which the finite model approximates the infinite one, as well as suggesting novel techniques and results. Both arguments use techniques based on convex analysis and separating hyperplane theorems, as in standard approaches to full extraction with finite types. The first is a natural analogue of the argument of Cr{\' e}mer and McLean, using geometric features of the set of agents' beliefs to explicitly construct a menu of contracts extracting the desired surplus. This argument requires a finite set of states over which agents are uncertain. By highlighting the connection to the geometry of the set of beliefs, this argument also leads to a counterexample showing that full extraction is not possible without further significant conditions on agents' beliefs or surplus, even if the designer offers an infinite menu of contracts. 

The second argument is based on duality, using the characterization of extraction as the existence of a solution to a particular family of inequalities. This argument, while not constructive, applies for an infinite state space, and thus yields the general result of McAfee and Reny (1992). Both arguments  suggest methods for studying surplus extraction in other models in which agents or the designer might have objectives other than risk-neutral value maximization. 

Our use of duality in the surplus extraction problem is inspired by and builds on the work of Rahman (2012), which introduced duality arguments to study surplus extraction in arbitrary type spaces. Rahman (2012) argues that full surplus extraction holds under an analogue of convex independence in a setting with general type spaces allowing the designer to offer an infinite menu of contracts.
We show by example that full extraction can fail under the assumptions of McAfee and Reny (1992), even allowing for an infinite menu of contracts. The two proofs we provide shed some light on why virtual extraction holds while full extraction can fail, and the extent to which duality arguments can be used to study surplus extraction in environments more general than the standard model. Other recent papers have also emphasized the importance of duality for different mechanism design questions. This includes work on multidimensional screening using optimal transport methods by Daskalakis, Deckelbaum, and Tzamos (2017), and work on optimal auction design under robustness concerns by Carroll and Segal  (2018) and Bergemann, Brooks, and Morris (2017a, b).   

The paper proceeds as follows. In section 2 we give some preliminary definitions and results from convex analysis that will be used throughout the paper. In section 3 we set up the basic model and definitions, including notions of surplus extraction. In section 4 we give a necessary condition for full extraction, and use this to derive a counterexample showing that full extraction can fail under the assumptions of McAfee and Reny (1992), even if the designer can offer an infinite menu of contracts. We then consider the case in which the state space is finite, and give a constructive proof that virtual extraction holds in this setting. In section 5 we consider the  case in which the state space can be infinite, and provide a proof of the general virtual extraction result based on duality. In section 6 we give an example illustrating how these methods can be used to study surplus extraction in models beyond the standard case. Additional results, some of which might be of independent interest, are collected in the appendix.   
 
\section{Preliminaries}

We recall and collect here some preliminary definitions and results from convex analysis.

\begin{definition}
Let $X$ be a topological vector space and $C\subseteq X$ be a convex set. An \emph{extreme point} of $C$ is a point $x\in C$ with the property that if $x=\alpha y + (1-\alpha) z $ for some $y,z\in C$ and some $\alpha \in [0,1]$, then $x=y$ or $x=z$. 

An \emph{exposed point} of $C$ is a point $x\in C$ such that there is some continuous real linear functional $f$ on $X$ such that $f(y) < f(x)$ for all $y\in C$ with $y\not= x$. 
\end{definition}

\medskip

\noindent {\bf Note:} Every exposed point is an extreme point, but the converse does not hold. That is, extreme points need not be exposed. If $X$ is locally convex and $C\subseteq X$ is compact and convex and has only finitely many extreme points, however, then every extreme point of $C$ is exposed. 

\medskip

\begin{definition}
A nonempty subset $F\subseteq C$ of a convex set $C$ is a \emph{face} of $C$ if $F$ is convex and whenever $x,y\in C$ and $\alpha x + (1-\alpha) y\in F$ for some $\alpha \in (0,1)$, then $x,y\in F$. A face $F$ of $C$ is a \emph{proper face} if it is a proper subset of $C$. 
\end{definition}

\medskip

\noindent {\bf Note:} Every face is a convex set by definition. Every extreme point is a (singleton) face, but a face can have more than one element in general. 

\medskip

\begin{definition}
Let $X$ be a topological vector space and $C\subseteq X$ be a convex set. A nonempty subset $E\subseteq C$ is an \emph{exposed set} of $C$ if there is a continuous real linear functional $f$ on $X$ such that $f(y) \leq f(x)$ for all $x,y\in C$ with $x\in E$, and $f(y) < f(x)$ if $y\not\in E$. 
\end{definition}

\medskip

\noindent {\bf Note:} An exposed set is a face, but a face need not be exposed. 

\bigskip

\begin{definition}
Let $C\subseteq {\bf R}^n$ be a convex set and let $W\subseteq {\bf R}^n$ be the unique affine subspace of ${\bf R}^n$ such that $C\subseteq W$ and such that $C$ has a nonempty relative interior in $W$. The \emph{dimension} of $C$, denoted $\mbox{ dim } C$, is the dimension of $W$. 
\end{definition}

\bigskip

We record a useful result that connects these concepts next. 

\begin{theorem}
Let $C\subseteq {\bf R}^n$ be a compact convex set, and $F\subseteq C$ be a proper face of $C$. Then $\mbox{ dim } F < \mbox{ dim } C$. 
\end{theorem}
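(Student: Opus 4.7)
The plan is to argue by contradiction: suppose $\dim F = \ell$ as well, and deduce that $F = C$, contradicting the hypothesis that $F$ is a proper face.

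Let $W$ be the affine hull of $C$, so $\dim W = \ell$. Since $F$ is a nonempty convex subset of $W$, the affine hull $\textup{aff}(F)$ is a subspace of $W$ of dimension $\dim F = \ell$, and hence $\textup{aff}(F) = W$. Any nonempty convex set has nonempty relative interior in its affine hull, so I can pick some $x$ in the relative interior of $F$ with respect to $W$. Because $F \subseteq C$, any relatively open neighborhood of $x$ in $W$ contained in $F$ is also contained in $C$, so $x$ also lies in the relative interior of $C$.

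Now take any $y \in C$. If $y = x$ there is nothing to do, so assume $y \neq x$. Since $x$ is in the relative interior of $C$ and the ray from $y$ through $x$ stays in $W$, I can extend the segment slightly past $x$: there exists $\epsilon > 0$ with $z := x + \epsilon(x-y) \in C$. Rearranging gives
\[
x = \tfrac{1}{1+\epsilon}\, z + \tfrac{\epsilon}{1+\epsilon}\, y,
\]
with $\tfrac{\epsilon}{1+\epsilon} \in (0,1)$. Since $x \in F$ and $F$ is a face of $C$, the face property forces $y, z \in F$. As $y \in C$ was arbitrary, $C \subseteq F$, so $F = C$, contradicting the hypothesis that $F$ is a proper face.

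The only nontrivial ingredient is the standard convex-analytic fact that a point in the relative interior of a convex set admits small extensions of segments to any other point of the set, which is routine; no serious obstacle is anticipated. Compactness of $C$ is not actually needed for the argument.
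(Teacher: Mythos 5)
Your proof is correct. The paper does not actually supply a proof of this theorem --- it simply cites Simon (2011, Proposition 8.10) --- so there is no in-paper argument to compare against; yours is the standard self-contained one. Each step checks out: if $\dim F = \ell$ then $\textup{aff}(F) = W = \textup{aff}(C)$, so a relative interior point $x$ of $F$ is also a relative interior point of $C$; the prolongation property of relative interior points (Rockafellar, Theorem 6.4) gives $z = (1+\epsilon)x - \epsilon y \in C$, exhibiting $x$ as a proper convex combination of $y,z \in C$; and the paper's definition of face then forces $y \in F$, whence $F = C$, contradicting properness. Your remark that compactness is superfluous is also right: the argument needs only that a nonempty convex set in ${\bf R}^n$ has nonempty relative interior in its affine hull, which is exactly how the paper defines dimension.
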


For example, see Simon (2011, Proposition 8.10).

Finally, note that if $C\subseteq {\bf R}^n$ is a compact, convex set with $\mbox{dim } C = 1$, then $C$ has finitely many extreme points, 
and thus all extreme points of $C$ are exposed.  

\section{Set-up and Extraction Notions}

In this section we lay out the basic set-up and notation used throughout the paper, and give the definitions of surplus extraction underlying the main results. 

We use the following standard notation. For a compact metric space $B$, $C(B)$ is the space of continuous real-valued functions on $B$, and ${\cal M}(B)$ is the space of finite signed Borel measures on $B$. Similarly, $\Delta(B)$ is the space of Borel probability measures on $B$. 

For $x\in C(B)$ and $\eta \in {\cal M}(B)$, we write $x\cdot \eta = \eta \cdot x$ for the bilinear form $\langle x, \eta \rangle = \langle \eta, x\rangle$, that is, 
\[
\eta \cdot x = x\cdot \eta = \langle x, \eta \rangle = \langle \eta, x \rangle = \int x(b) \ \eta(db)
\]

We follow McAfee and Reny (1992) in giving a reduced form description of the surplus extraction problem. In a prior unmodeled stage, agents play a game that leaves them with some information rents as a function of their private information. Private information is summarized by the type $t\in T$, where $T$ denotes the set of possible types. Unless specified otherwise, we let $T = [0,1]$ be the set of types (for all of the main results it suffices that $T$ is a compact metric space). The current stage also has an exogenous source of uncertainty, summarized by a set of states $S$, on which contract payments can depend. We follow McAfee and Reny (1992) in allowing $S$ to be arbitrary, and assume throughout that $S$ is a compact metric space. For some applications, it is natural to take $S=T^n$ for some $n$, as in the original results of Cr{\' e}mer and McLean (1988) on auctions, or other settings with multiple agents in which this uncertainty is over the profile of agents' types. In these applications the cardinality of $S$ is greater than or equal to the cardinality of $T$, and in particular $S$ is infinite whenever $T$ is infinite. In other applications this public information is unrelated to agents' private information, so $S$ and $T$ are independent. Such applications include contracting with public ex post information, as in the work orginating with Riordan and Sappington (1988).\footnote{We thank a referee for suggesting this example.} In these settings $T$ might be infinite while $S$ is finite; we consider this possibility in particular in section 4.  

To each type $t\in T$ is then associated a value $v(t)\in {\bf R}$ and beliefs $\pi(t) \in \Delta (S)$. We typically interpret the value $v(t)$ as rents from the prior stage, but this could also represent any other revenue target of the designer consistent with individual rationality.\footnote{We thank Tilman B{\" o}rgers for suggesting this interpretation. See also the discussion at the end of this section.} Throughout we maintain the assumption that $v:T\to {\bf R}$ is continuous and that $\pi:T\to \Delta(S)$ is norm continuous.  

Let 
\[
C:= \mbox{co} \{ \pi(t) \in \Delta (S): t\in T \}
\]
where for a subset $A$ of a topological vector space $X$, $\mbox{co} (A)$ denotes the convex hull of $A$, and $\overline{\mbox{co}} (A)$ denotes the closed convex hull of $A$. 

Following our general notation, let $\Delta(T)$ be the set of Borel probability measures on $T$. For $t\in T$, $\delta_t\in \Delta(T)$ denotes the Dirac measure concentrated on $t$. For $x\in C(T\times S)$ and $t\in T$, we write $x(t) \in C(S)$ for the function such that $x(t) (r) = x(t,r)$ for each $r\in S$. Throughout we use $s$ and $t$ as generic elements of $T$.

Next we give definitions for the main notions of surplus extraction in this setting, full extraction and virtual extraction. Both reflect the idea that the designer offers agents a menu of stochastic contracts from which they choose, based on minimizing their expected costs. Exploiting correlation between types and beliefs might allow the designer to construct such a menu that leaves every agent with zero expected surplus, in the case of full extraction, or no more than $\varepsilon$ expected surplus for any $\varepsilon >0$, in the case of virtual extraction.

\bigskip
\begin{definition}
\emph{Full extraction} holds if, for each given $v:T\to {\bf R}$, there exists a collection $\{c(t) \in C(S) : t\in T\}$ such that for each $t\in T$:
\[ 
v(t) - \pi(t)\cdot c(t) = 0
\]
and
\[
v(t) -\pi(t)\cdot c(s) \leq 0 \ \ \ \forall s\not= t
\]
\emph{Virtual extraction} holds if, for each given $v:T\to {\bf R}$, for each $\varepsilon >0$ there exists a collection $\{c_\varepsilon(t)\in C(S): t\in T\}$ such that for each $t\in T$:
\[
0\leq v(t) - \pi(t)\cdot c_\varepsilon(t) \leq \varepsilon
\]
and
\[
v(t) -\pi(t)\cdot c_\varepsilon(s) \leq \varepsilon \ \ \ \forall s\not= t
\]
\end{definition}

\bigskip

Equivalently, if $T$ is endowed with some relevant measure, each definition above could require that these conditions hold for almost every $t, s \in T$ instead of for all $t, s\in T$. 

As defined, full extraction or virtual extraction might require the designer to offer an infinite menu of contracts when $T$ is infinite. In the case of virtual extraction, such a menu need not have an expected cost minimizing element for all agents. By allowing for an infinite menu of contracts, this might also appear to be a weaker notion of virtual extraction than considered by McAfee and Reny (1992), which instead shows that for each $\varepsilon >0$, there is a finite menu $\{ c_1, \ldots ,c_n\}$ such that for each $t\in T$, 
\[
0\leq \max_{j=1,\ldots ,n} \lbrace v(t) -\pi(t) \cdot c_j \rbrace \leq \varepsilon 
\]
We note, however, that whenever virtual extraction holds (using the definition above), then it is always possible to find a finite menu of contracts that would achieve the same bounds on surplus. We record this observation and its proof below.

\begin{theorem}
If virtual extraction holds, then virtual extraction can be achieved with a finite menu of contracts. That is, given $v:T\to {\bf R}$, for each $\varepsilon >0$ there exists a finite menu $\{ c_1,\ldots ,c_n\}\subseteq C(S)$ such that for each $t\in T$, 
\[
0\leq \max_{j=1,\ldots ,n} \lbrace v(t) -\pi(t) \cdot c_j \rbrace \leq \varepsilon 
\]
\end{theorem}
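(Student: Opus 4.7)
The plan is to reduce to the infinite-menu hypothesis with a smaller tolerance parameter, then use compactness of $T$ together with continuity to extract a finite sub-collection, and finally apply a uniform constant shift to the contracts so that the lower bound survives the continuity estimate.

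Fix $\varepsilon>0$. First I would apply the hypothesis with $\varepsilon/3$ in place of $\varepsilon$ to obtain a collection $\{c_{\varepsilon/3}(t) : t\in T\}\subseteq C(S)$ satisfying
\[
0\leq v(t) - \pi(t)\cdot c_{\varepsilon/3}(t) \leq \varepsilon/3 \quad\text{and}\quad v(t)-\pi(t)\cdot c_{\varepsilon/3}(s)\leq \varepsilon/3 \text{ for all } s\neq t.
\]
For each $t_0\in T$, joint continuity of $v$ and norm continuity of $\pi$, together with $\|c_{\varepsilon/3}(t_0)\|_\infty<\infty$ (which is finite since $S$ is compact), produce an open neighborhood $U_{t_0}$ of $t_0$ such that $|v(t)-v(t_0)| + \|\pi(t)-\pi(t_0)\|_{TV}\cdot\|c_{\varepsilon/3}(t_0)\|_\infty < \varepsilon/3$ for every $t\in U_{t_0}$. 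Compactness of $T$ lets me extract a finite subcover $U_{t_1},\ldots,U_{t_n}$.

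The key trick is to not take the menu to be $\{c_{\varepsilon/3}(t_j)\}$ directly (that would only bound $\max_j\{v(t)-\pi(t)\cdot c_{\varepsilon/3}(t_j)\}$ below by $-2\varepsilon/3$), but instead the shifted menu $c_j := c_{\varepsilon/3}(t_j) - (2\varepsilon/3)\mathbf{1}$, where $\mathbf{1}$ is the constant function $1$ on $S$. Since $\pi(t)$ is a probability measure, $\pi(t)\cdot c_j = \pi(t)\cdot c_{\varepsilon/3}(t_j) - 2\varepsilon/3$, so the upper bound is immediate: for any $t$ and any $j$,
\[
v(t)-\pi(t)\cdot c_j \;=\; v(t)-\pi(t)\cdot c_{\varepsilon/3}(t_j) + \tfrac{2\varepsilon}{3} \;\leq\; \tfrac{\varepsilon}{3} + \tfrac{2\varepsilon}{3} \;=\; \varepsilon.
\]
For the lower bound, pick any $j$ with $t\in U_{t_j}$ and decompose
\[
v(t)-\pi(t)\cdot c_j \;=\; \bigl(v(t_j)-\pi(t_j)\cdot c_{\varepsilon/3}(t_j)\bigr) + \bigl(v(t)-v(t_j)\bigr) + \bigl(\pi(t_j)-\pi(t)\bigr)\cdot c_{\varepsilon/3}(t_j) + \tfrac{2\varepsilon}{3}.
\]
The first term is $\geq 0$, the next two together are $> -\varepsilon/3$ by the choice of $U_{t_j}$, and the $2\varepsilon/3$ absorbs the rest, yielding $v(t)-\pi(t)\cdot c_j > \varepsilon/3 > 0$. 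Thus $\max_j\{v(t)-\pi(t)\cdot c_j\}\in [0,\varepsilon]$ for every $t\in T$.

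The only conceptual step beyond routine bookkeeping is recognizing that a uniform additive shift is needed: a straight finite subcover argument loses $2\varepsilon/3$ on each side of the continuity estimate, so without the shift the lower bound would be negative. The shift preserves the upper bound (it only uses the uniform $\varepsilon/3$ bound across all contract indices) while creating enough slack to make the continuity argument give non-negativity. Everything else is continuity of $v$, norm continuity of $\pi$, compactness of $T$, and the standard estimate $|\eta\cdot c|\leq \|\eta\|_{TV}\|c\|_\infty$.
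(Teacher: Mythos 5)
Your proof is correct and follows essentially the same route as the paper's: apply the infinite-menu hypothesis at a reduced tolerance, use continuity of $v$ and norm continuity of $\pi$ with compactness of $T$ to extract a finite subcover, and then shift the selected contracts down by a constant so the continuity loss cannot push the maximal surplus below zero. The only (minor) difference is bookkeeping: by starting from $\varepsilon/3$ and shifting by $2\varepsilon/3$ you land exactly in $[0,\varepsilon]$, whereas the paper obtains a $2\varepsilon$ bound and invokes the arbitrariness of $\varepsilon$.
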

\begin{proof}
Let $v:T\to {\bf R}$ be given and fix $\varepsilon >0$. Choose $\{c_\varepsilon(t) \in C(S) : t\in T\}$ such that for each $t\in T$:
\[
0\leq v(t) - \pi(t) \cdot c_\varepsilon (t) \leq \varepsilon 
\]
and
\[
0\leq \sup_{s\in T} \lbrace v(t) - \pi(t) \cdot c_\varepsilon (s) \rbrace \leq \varepsilon 
\]
Then for each $t\in T$ there exists $\delta_t>0$ such that $s\in B_{\delta_t}(t) \Rightarrow$
\[
-\frac{\varepsilon}{2} \leq v(s) - \pi(s)\cdot c_\varepsilon(t) \leq \varepsilon 
\]
where $B_{\delta_t}(t) = \{ s\in T: \Vert s-t \Vert < \delta_t\}$. Since $T$ is compact and $\{ B_{\delta_t}(t) : t\in T\}$ is an open cover of $T$, there exists $t_1,\ldots ,t_n$ such that $T\subseteq \cup_i B_{\delta_{t_i}}(t_i)$. Then for each $i$, set $c_i = c_\varepsilon(t_i) - \varepsilon$. For each $t\in T$ there exists $i$ such that $t\in B_{\delta_{t_i}}(t_i)$. Thus 
\[
0\leq v(t) -\pi(t) \cdot c_i \leq 2\varepsilon 
\]
and for each $j=1,\ldots ,n$, 
\[
v(t) -\pi(t) \cdot c_j \leq 2\varepsilon 
\]
Thus
\[
0\leq \max_{j=1,\ldots ,n} \lbrace v(t) -\pi(t) \cdot c_j \rbrace \leq 2\varepsilon 
\]
The result follows.
\end{proof}

\bigskip

As with the original formulation of McAfee and Reny (1992), these notions of full and virtual extraction do not explicitly address incentive compatibility. When full extraction holds, incentive compatibility will follow. Incentive compatibility does   not immediately follow from the definition of virtual extraction, however; this would require that the menu of contracts satisfies the additional incentive constraints $v(t) -\pi(t)\cdot c(s) \leq v(t) - \pi(t) \cdot c(t)$ for all $s,t\in T$. But if virtual extraction holds, then there is always some menu achieving virtual extraction that is also incentive compatible, by virtue of Theorem 2. If virtual extraction holds, then it can be achieved with some finite menu of contracts $\{c_1, \ldots ,c_n\}$. Choosing $c(t) \in \argmax \{ v(t) - \pi(t) \cdot c_i : i=1,\ldots ,n\}$ for each $t\in T$ then yields a menu that achieves virtual extraction and is also incentive compatible. More generally, an infinite menu of contracts might allow for virtual extraction while also satisfying incentive compatibility, but need not. Our results do not address this question directly, although they suggest that probabilistic independence is not sufficient to guarantee incentive compatibility for every infinite menu.

Following Cr{\' e}mer and McLean (1988) and McAfee and Reny (1992), we consider conditions on beliefs under which full extraction or virtual extraction holds. McAfee and Reny (1992) show that virtual extraction is possible whenever beliefs satisfy the following condition.

\bigskip

\begin{definition}
Types satisfy {\it probabilistic independence} if for all $t\in T$:
\[
\pi(t) = \int \pi(s) \mu(ds) \mbox{ for some } \mu\in \Delta(T) \Rightarrow \mu = \delta_t
\]
\end{definition}

\noindent {\bf Note:} If types satisfy probabilistic independence, then $\pi(t)$ is an extreme point of $C$ for each $t\in T$.

\noindent {\bf Note: } If $T$ is finite, then probabilistic independence reduces to the standard {\it convex independence} condition of Cr{\' e}mer-McLean, that is 
\[
\pi(t) = \sum_{s\in T} \mu_s \pi(s) \mbox{ for some } \mu\in \Delta(T) \Rightarrow \mu_t=1
\]

Cr{\' e}mer and McLean (1988) show that when $T$ and $S$ are finite, then full extraction holds whenever beliefs satisfy convex independence. This is no longer true when $T$ is infinite, as the example in the next section illustrates. We sketch a standard argument for the classic result when $T$ and $S$ are both finite next, to motivate the main ideas we develop in the following sections. 

Suppose $T$ and $S$ are finite, and types satisfy convex independence. Fix a type $t\in T$. Since types satisfy convex independence, $\pi(t) \not\in \overline{\mbox{co}} \{ \pi(s) : s\in T, s\not= t\}$. Thus there exists $z(t) \in {\bf R}^S$ such that 
\[
\pi(t) \cdot z(t) = 0
\]
and
\[
\pi(s) \cdot z(t) >0 \ \ \forall s\in T, \ s\not= t
\]
That is, $\pi(t)$ is an exposed point of $\{ \pi(s): s\in T\}$, and of $C = \mbox{co} \{ \pi(s): s\in T\}$. Alternatively, by convex independence, $\{ \pi(s): s \in T\}$ is the set of extreme points of $C$; since $T$ is finite this set is finite, so each element $\pi(t)$ must also be an exposed point of $C$. Now consider a contract of the form $c(t) = v(t) + \alpha(t) z(t)$ where $\alpha(t) \in {\bf R}_+$, which requires the constant payment $v(t)$ and a stochastic payment that is a scaled version of $z(t)$ (throughout we use a constant $r\in {\bf R}$ interchangeably with the function $r{\bf 1}(S)$, where ${\bf 1}(S)$ denotes the identity on $S$). For type $t$, this contract has expected cost $v(t)$, as 
\[
\pi(t) \cdot c(t) = v(t) + \alpha(t) (\pi(t) \cdot z(t)) = v(t)
\]
while for types $s\not= t$, the expected cost is
\[
\pi(s) \cdot c(t) = v(t) + \alpha(t) (\pi(s)\cdot z(t))
\]
Since $\pi(s)\cdot z(t)>0$ for all types $s\not= t$, the designer can take advantage of this difference in beliefs to set $\alpha(t)$ sufficiently large to make the resulting contract unattractive to all types $s\not= t$ while keeping the expected cost constant for type $t$. To that end, set $\alpha(t)>0$ sufficiently large so that
\[
\alpha(t) > \max\limits_{s\not= t} \frac{v(s)-v(t)}{\pi(s)\cdot z(t)}
\]
Note that since $\pi(s)\cdot z(t)>0$ for all $s\not= t$ and $T$ is finite, the term on the right above is well-defined (the ratio is bounded), and thus $\alpha(t)$ is well-defined. Then for type $t$,
\[
v(t) - \pi(t) \cdot c(t) = v(t) - v(t) = 0
\]
while for types $s\not= t$, 
\[
v(s) - \pi(s) \cdot c(t) = v(s) - v(t) - \alpha(t) ( \pi(s) \cdot z(t)) <0
\]
by choice of $\alpha(t)$. Repeating this construction for each type $t\in T$ yields a menu $\{  c(t) : t\in T\}$ that achieves full extraction. Finally, if in addition the cardinality of $S$ is greater than or equal to the cardinality of $T$, then convex independence is satisfied for almost all elements of $\Delta(S)^T$.

Under a stronger condition on beliefs, full extraction can be achieved while satisfying ex post incentive compatibility. In this simple reduced form setting, ex post incentive compatibility is equivalent to the menu $\{ c(t) : t\in T\}$ consisting of a single contract, that is, $c(t) = c$ for all $t\in T$. For this stronger form of full extraction, suppose that $\{ \pi(t) : t\in T\}$ is linearly independent. Thus for each $t\in T$, $\pi(t) \not\in \mbox{ span } \{ \pi(s): s\not= t\} = \{ \sum_{s\not= t} \alpha_s \pi(s) : \alpha_s \in {\bf R} \ \forall x\not= t\}$.  Fix $t\in T$. Since $\pi(t) \not\in \mbox{ span } \{ \pi(s): s\not= t\}$, there exists $z(t) \in {\bf R}^S$ such that 
\[
\pi(t) \cdot z(t) >0
\]
and
\[
\pi(s) \cdot z(t) = 0  \ \ \forall s\not= t
\]
Then set
\[
c(t) = \frac{v(t)}{\pi(t)\cdot z(t)} z(t)
\]
and note that $\pi(t) \cdot c(t) = v(t)$ while $\pi(s) \cdot c(t) = 0$ for all $s\not= t$. Repeat this argument for each $t\in T$.  Define
\[
c = \sum_{t\in T} c(t)
\]
Then for each $t\in T$, $\pi(t) \cdot c = v(t)$. Thus full extraction holds under ex post incentive compatibility. This is the analogue in this reduced form setting of the classic results of Cr{\' e}mer and McLean (1988) for dominant strategy mechanisms. 
 
We close this section by illustrating how this reduced form formulation can incorporate standard problems, including those with multiple agents, by considering quasilinear mechanism design problems with $n$ agents. Each agent $i=1,\ldots ,n$ has a type $t_i \in T_i$ and belief $\pi_i(t_i) \in \Delta(T_{-i})$, where $T_{-i} = \prod_{j\not= i} T_j$. To simplify this discussion and the connections with the standard formulation as in Cr{\'e}mer and McLean (1988), here we take $T_i$ to be finite for each $i$. A direct mechanism $(q,p)$ consists of an allocation function $q:\prod_{i=1}^n T_i \to {\bf R}^n$ and a payment function $p:\prod_{i=1}^n T_i \to {\bf R}^n$. Given $i$ and $t_i\in T_i$, write $q_i(t_i), p_i(t_i) \in {\bf R}^{\vert T_{-i}\vert}$ for the functions such that $q_i(t_i)(t_{-i}) =  q_i(t_i, t_{-i})$  and $p_i(t_i)(t_{-i}) = p_i(t_i, t_{-i})$ for each $t_{-i} \in T_{-i}$. Here $\vert T_{-i}\vert$ denotes the cardinality of $T_{-i}$. 

For each agent $i$, given $t_{-i}$, the payoff from reporting $t_i'$ in this mechanism when true type is $t_i$ is 
\[
t_i q_i(t_i', t_{-i}) - p_i(t_i', t_{-i})
\] 
Define $v_i:T_i \to {\bf R}$ by
\[
v_i(t_i) = \max_{t_i'\in T_i}  \pi_i(t_i) \cdot ( t_i q_i(t_i') -  p_i(t_i') )
\]
This gives the information rent of agent $i$ in this mechanism.

Suppose for each agent $i$, beliefs $\{ \pi_i(t_i): t_i\in T_i\}$ satisfy convex independence. Then following the argument above, full extraction of $v_i$ is possible for each $i$. This implies, for example, if $(q,p)$ is Bayesian incentive compatible, then there is a Bayesian incentive compatible mechanism $(q, p')$ with the same allocation function $q$ that extracts all the surplus from the original mechanism, that is, such that $\pi_i(t_i) \cdot p_i'(t_i) = v_i(t_i)$ for all $t_i\in T_i$ and for all $i$. 

In this case, a stronger result is also possible, as in B{\" o}rgers (2015). Given any direct mechanism $(q,p)$, there is a direct mechanism $(q,p')$ that is Bayesian incentive compatible, has the same allocation rule $q$, and for which interim expected payments in $p'$ are equivalent to those in $p$, that is, $\pi_i(t_i) \cdot p_i(t_i) = \pi_i(t_i) \cdot p_i'(t_i)$ for each $t_i\in T_i$ and each $i$. While the surplus function $v_i$ defined above is too coarse to derive this result, a straightforward extension of the arguments above can account for incentives to misreport in the original mechanism. To that end, for each $i$ define $u_i:T_i\times T_i \to {\bf R}$ by
\[
u_i(t_i, t_i') =  \pi_i(t_i) \cdot ( t_i q_i(t_i') -   p_i(t_i') )
\]
Modifying the arguments above yields for each $i$ a menu $\{z_i(t_i)\in {\bf R}^{\vert T_{-i}\vert} : t_i\in T_i\}$ such that $\pi_i(t_i) \cdot z_i(t_i) = 0$ while $\pi_i(t_i) \cdot z_i(t_i') > u_i(t_i, t_i') - u_i(t_i, t_i)$ for any $t_i'\not= t_i$. Then using the payment functions $p_i'(t_i) = p_i(t_i) + z_i(t_i)$ for each $t_i\in T_i$ and for each $i$, the mechanism $(q,p')$ is Bayesian incentive compatible while $\pi_i(t_i) \cdot p_i'(t_i) = \pi_i(t_i) \cdot p_i(t_i)$ for each $t_i\in T_i$ and for each $i$. 

Similarly, under the stronger condition on beliefs that $\{\pi_i(t_i) : t_i\in T_i\}$ is linearly independent for each $i$, full extraction can be achieved with dominant strategy incentive compatibility. In this case, if $(q,p)$ is dominant strategy incentive compatible, there is a dominant strategy incentive compatible mechanism $(q,p')$ with same allocation rule $q$ that extracts all surplus from $(q,p)$.

\section{Detectability and Finite State Space}

In this section, we connect the extraction problem to the underlying convex geometry, as in the setting with finitely many types. This lets us establish a more direct connection between the extraction problem with finitely many types and with infinitely many types. From this connection we identify a simple necessary condition for full extraction, based on the geometry of the set of beliefs $\{ \pi(t): t\in T\}$. This necessary condition sheds light on why probabilistic independence is no longer sufficient for full extraction with infinitely many types, and allows us to give an example to illustrate this breakdown. The example also sheds light on why virtual extraction holds nonetheless, and on the nature of types to whom some surplus might need to be left. Finally, these observations lead to a constructive proof that virtual extraction holds under probabilistic independence, analogous to classic arguments in the finite type case, in which we will explicitly construct a menu of contracts to achieve extraction of all but at most $\varepsilon$ surplus for each $\varepsilon > 0$. This argument requires the state space $S$ to be finite. When $S$ is finite, with abuse of notation, we use the symbol $S$ interchangeably for the state space and its cardinality in this section. 

We start with the observation that if full extraction holds, then for each $t\in T$, $\pi(t)$ must be an exposed point of $\{ \pi(s): s\in T\}$ and of $C= \mbox{co} \{ \pi(s) : s\in T \}$.\footnote{This can be viewed as a strengthening of Proposition 2 in Heifetz and Neeman (2006), which shows that full surplus extraction requires the ``beliefs-determine-preferences'' condition that distinct types must hold distinct beliefs. We thank a referee for this observation.}  

\begin{theorem}
If full extraction holds, then for each $t\in T$, $\pi(t)$ is an exposed point of $\{ \pi(s): s\in T\}$ and of $C= \mbox{co} \{ \pi(s) : s\in T \}$. 
\end{theorem}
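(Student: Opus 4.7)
My approach is to apply the full extraction definition (which I read as requiring the desired contracts to exist for \emph{every} continuous value function $v$) to a carefully chosen $v$, one that will force a strict separation of $\pi(t_0)$ from every other $\pi(t)$.

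Fix $t_0 \in T$. Let $d$ denote the metric on $T$, and take $v(t) := d(t, t_0)$, which is continuous on $T$, vanishes at $t_0$, and is strictly positive elsewhere. Applying full extraction to this $v$ produces a menu $\{c(t) : t \in T\} \subseteq C(S) = {\bf R}^S$. Setting $z := c(t_0)$, the zero-surplus equality for type $t_0$ gives $\pi(t_0) \cdot z = v(t_0) = 0$, while the incentive constraints $v(t) - \pi(t) \cdot c(s) \leq 0$ applied with $s = t_0$ and any $t \neq t_0$ give $\pi(t) \cdot z \geq v(t) = d(t, t_0) > 0$. Thus $z$ strictly separates $\pi(t_0)$ from every other $\pi(t)$ in $\{\pi(s) : s \in T\}$, which is precisely the definition of $\pi(t_0)$ being an exposed point of that set. (As a byproduct, this forces $\pi$ to be injective on $T$.)

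To upgrade to exposure in $C$, I will use that $S$ is finite, so $\pi(T)$ is a compact subset of the finite-dimensional simplex $\Delta(S) \subseteq {\bf R}^S$. By Carath\'eodory's theorem, the convex hull of a compact set in a finite-dimensional space is itself compact (hence closed), so $C = \mbox{co}\{\pi(s) : s \in T\}$. Any $y \in C$ can therefore be written as a finite convex combination $y = \sum_i \lambda_i \pi(t_i)$ with $\lambda_i > 0$ and $\sum_i \lambda_i = 1$, giving
\[
y \cdot z = \sum_i \lambda_i \, \pi(t_i) \cdot z \geq 0,
\]
with equality if and only if each $t_i = t_0$, i.e., $y = \pi(t_0)$. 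Hence $z$ exposes $\pi(t_0)$ in $C$ as well.

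The main obstacle worth flagging is the choice of $v$: a constant or otherwise flat value function would yield only weak support of $\pi(t_0)$ by $z$, not the strict separation required for exposure. Using a continuous function that attains its minimum \emph{only} at $t_0$ is exactly what converts the weak incentive inequalities into the strict inequalities characterizing exposure. Once this is in hand, the extension from separation on $\{\pi(s) : s \in T\}$ to separation on $C$ is a routine consequence of finite-dimensional convex geometry.
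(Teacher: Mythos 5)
Your proof is correct and follows essentially the same route as the paper's: choose a value function uniquely minimized at the target type, use the zero-surplus and no-envy constraints of the full-extraction menu to turn the contract for that type into a strictly exposing functional, and pass to the convex hull via finite convex combinations. Your explicit appeal to Carath\'eodory to justify that every element of $\overline{\mbox{co}}\,\{\pi(s):s\in T\}$ is a finite convex combination is a small extra point of care that the paper leaves implicit.
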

\begin{proof}
Fix $t\in T$. Choose $v:T\to {\bf R}$ such that $v(s) > v(t)$ for all $s\not= t$. Since full extraction holds, there exists a collection $\{ c(s)\in C(S): s\in T\}$ such that 
\[
v(t) - \pi(t) \cdot c(t) = 0
\]
and
\[
v(s) - \pi(s) \cdot c(t) \leq 0 \ \ \ \forall s\not= t
\]
Write $c(t) = v(t) + z(t)$, where $z(t) \in C(S)$ and with abuse of notation $v(t) = v(t) {\bf 1}_S$. Since $\pi(t) \cdot c(t) = v(t)$, 
\[
\pi(t) \cdot z(t) =0
\]
Then for $s\not= t$, 
\[
\pi(s) \cdot c(t) = v(t) + \pi(s) \cdot z(t) \geq v(s)
\]
Thus
\[
\pi(s) \cdot z(t) \geq v(s)-v(t) >0 \ \ \ \forall s\not= t
\]
That is, $\pi(t)$ is an exposed point of $\{ \pi(s): s\in T\}$. 

Now take $\pi\in C= \mbox{co} \{ \pi(s) : s\in T \} \subseteq \Delta(S)$ with $\pi\not= \pi(t)$. Then by definition, there exists $\{ t_1,\ldots , t_n\} \subseteq T$ and $\alpha_1, \ldots ,\alpha_n >0$ with $\sum_i \alpha_i =1$ such that $\pi = \sum_i \alpha_i \pi(t_i)$. Since $\pi \not= \pi(t)$, there exists $i$ such that $t_i \not= t$. Then
\[
\pi \cdot z(t) = \sum_i \alpha_i \pi(t_i)\cdot z(t) >0
\]
since $\exists i $ such that $t_i \not= t$ and $\pi(t_i)\cdot z(t) >0$ for all such $t_i$. 

Thus $\pi(t)$ is an exposed point of $C$. Since $t\in T$ was arbitrary, $\pi(t)$ is an exposed point of  $\{ \pi(s): s\in T\}$, and of $C= \mbox{co} \{ \pi(s) : s\in T \}$ for each $t\in T$. 
\end{proof}

\bigskip

This observation motivates the following definition, reframing the necessary condition above in terms of agents' beliefs.

\begin{definition}
Type $t\in T$ is {\it detectable} if $\exists \ z\in C(S)$ such that 
\begin{eqnarray*}
\pi(t) \cdot z &=& 0\\
\pi(s) \cdot z &>& 0 \ \forall s\not= t
\end{eqnarray*}
\end{definition}

\bigskip

\noindent{\bf Note: } A type $t$ is detectable if and only if $\pi(t)$ is an exposed point of $C$. Using this terminology, we can then restate Theorem 3 as follows: full extraction requires that all types are detectable.

\bigskip

Next we give an example to illustrate the failure of full extraction when $T$ is infinite. The example illustrates why full extraction can fail and also suggests why virtual extraction can hold despite the failure of full extraction. In the example, probabilistic independence is satisfied, but some types are not detectable.  

\begin{figure}

\centering 

\begin{tikzpicture}[scale = 0.6]

\coordinate [label=right: $\pi(0)$] (a) at (4,1);

\coordinate [label=right: $\pi(1)$] (b) at (4,-1);

\fill (4,1) circle (2pt);

\fill (4,-1) circle (2pt);

\draw  (4,1) .. controls (4,4) and (0,4) .. (0,0);

\draw  (0,0) .. controls (0,-4) and (4,-4) .. (4,-1);

\end{tikzpicture}
\caption{Graph of $\{ \pi(t): t\in T\}$ }
\end{figure}

\noindent {\bf Example 1: } Let $\pi:T\to \Delta(S)$ be as in Figure 1, and suppose $v(t)>v(0)$ for all $t\not= 0$. Note that for every $t\not= 0,1$, $\pi(t)$ is an exposed point of $C= \mbox{co} \{ \pi(s) : s\in T \}$, while $\pi(0)$ and $\pi(1)$ are extreme points of $C$ that are not exposed. That is, $t$ is detectable in $T$ for all $t\not= 0,1$, while types $t=0$ and $t=1$ are not detectable in $T$. See Figures 2 and 3.

To see that these beliefs satisfy probabilistic independence, suppose 
\[
\pi(t) = \int \pi(s)\ \mu(ds) \ \ \ \ \mbox{ for some } t\in T \mbox{ and } \mu\in \Delta(T)
\]
First suppose $t\not\in \{ 0,1 \}$. Then since $\pi(t)$ is an exposed point of $C$, there exists $z(t) \in {\bf R}^S$ such that $\pi(t) \cdot z(t) =0$ 
and $\pi(s) \cdot z(t) >0$ \ $\forall s\not= t$. Then
\[
0 = \pi(t) \cdot z(t) = \int \pi(s)\cdot z(t) \ \mu(ds)
\]
Since $\pi(s)\cdot z(t) >0$ for all $s\not= t$ and $\pi(t) \cdot z(t) =0$, this implies $\mu = \delta_t$. 

Now suppose $t\in \{ 0,1\}$. In this case, there exists $z(t) \in {\bf R}^S$ such that $\pi(t) \cdot z(t) = \pi(0)\cdot z(t) = \pi(1) \cdot z(t) =0$
and $\pi(s) \cdot z(t) >0$ \ $\forall s\not= 0,1$. Then as above, 
\[
0=\pi(t)\cdot z(t) = \int \pi(s)\cdot z(t) \ \mu(ds)
\]
Since $\pi(s)\cdot z(t) >0$ for all $s\not= 0,1$ and $\pi(0)\cdot z(t) = \pi(1)\cdot z(t) =0$, this implies $\mbox{ supp } \mu \subseteq \{ 0,1\}$. Thus $\pi(t) = \alpha \pi(0)+ (1-\alpha) \pi(1)$ for some $\alpha \in [0,1]$. But since both $\pi(0)$ and $\pi(1)$ are extreme points of $C$, this implies $\mu = \delta_t$. 

Thus beliefs satisfy probabilistic independence. Full extraction does not hold, however, by Theorem 3, since $\pi(0)$ and $\pi(1)$ are not exposed points in $C$, that is, types $t=0$ and $t=1$ are not detectable in $T$.\footnote{Similar examples can be constructed for any compact metric space $S$ with at least three elements. The key condition necessary for such an example is that the set has infinitely many extreme points, and thus that $T$ is infinite.}

\begin{figure}
\centering
\begin{tikzpicture}[scale = 0.6]

\coordinate [label=right: $\pi(0)$] (a) at (4,1);

\coordinate [label=right: $\pi(1)$] (b) at (4,-1);

\fill (4,1) circle (1pt);

\fill (4,-1) circle (1pt);

\draw  [fill = gray!20] (4,1) .. controls (4,4) and (0,4) .. (0,0);

\draw [fill=gray!20] (0,0) .. controls (0,-4) and (4,-4) .. (4,-1);

\draw [gray!20, fill=gray!20] (0,0) -- (4,1) -- (4,-1) -- (0,0) ;

\draw (4,1) -- (4,-1);

\fill (4,1) circle (2pt);

\fill (4,-1) circle (2pt);

\end{tikzpicture}

\caption{Graph of $C = \mbox{co} \{ \pi(t): t\in T\}$; $\pi(0)$ and $\pi(1)$ are extreme points of $C$ that are not exposed.}
\end{figure}

In this case, it is not difficult to see directly why full extraction fails. As above, suppose $v(t) >v(0)$ for all $t\not= 0$. Notice in particular this implies $v(1)>v(0)$, so the surplus of type 1 is greater than the surplus of type 0. Now if full extraction were possible, there must exist a contract $c(0) = v(0) + z(0)$ for some $z(0) \in {\bf R}^S$ such that
\[
v(0) = \pi(0)\cdot c(0) = v(0) + \pi(0)\cdot z(0) \iff \pi(0) \cdot z(0) =0
\]
and 
\[
v(s) \leq \pi(s) \cdot c(0) = v(0) + \pi(s)\cdot z(0) \iff \pi(s) \cdot z(0) \geq v(s)-v(0) >0 \ \ \ \forall s\not= 0
\]

\begin{figure}
\centering
\begin{tikzpicture}[scale = 0.6]

\coordinate [label=right: $\pi(0)$] (a) at (4,1);

\coordinate [label=right: $\pi(1)$] (b) at (4,-1);

\fill (4,1) circle (1pt);

\fill (4,-1) circle (1pt);

\draw  [fill = gray!20] (4,1) .. controls (4,4) and (0,4) .. (0,0);

\draw [fill=gray!20] (0,0) .. controls (0,-4) and (4,-4) .. (4,-1);

\draw [gray!20, fill=gray!20] (0,0) -- (4,1) -- (4,-1) -- (0,0) ;

\draw (4,1) -- (4,-1);

\draw (4,3) -- (4, -3);

\fill (4,1) circle (2pt);

\fill (4,-1) circle (2pt);

\end{tikzpicture}

\caption{Full extraction is impossible when $v(t)>v(0)$ for all $t\not= 0$. }
\end{figure}

Here, however, if $\pi(0) \cdot z=0$ and $\pi(s) \cdot z \geq 0$ for all $s\in T$, it must be the case that $\pi(1) \cdot z=0$ as well (see Figure 3). But then the contract $c(0)$ must leave type 1 with strictly positive surplus, as 
\[
v(1) - \pi(1) \cdot c(0) = v(1) - v(0) -\pi(1)\cdot z(0) = v(1)-v(0) >0
\] 
So any contract that extracts full surplus from type 0 and does not provide surplus to other types $t\in (0,1)$ must leave strictly positive surplus for type 1. \hfill$\diamondsuit$

\bigskip

Although full extraction is not possible in this example, virtual extraction is. Figures 4 and 5 illustrate the idea, and also illustrate that it might be necessary to give small rents to some types close to a type that is not detectable in order to extract surplus from other types from which that type cannot be distinguished. We sketch the argument here, which then serves as the template for the general constructive proof we give below. 

\begin{figure}
    \centering
    \begin{minipage}{0.45\textwidth}
        \centering
\begin{tikzpicture}[scale = 0.6]

\coordinate [label=right: $\pi(0)$] (a) at (4,1);

\coordinate [label=right: $\pi(1)$] (b) at (4,-1);

\coordinate [label = above right: $\pi(t)$] (c) at (3,2.95);

\coordinate [label = left: $z(t)$] (c) at (1.25,3.25);

\fill (4,1) circle (1pt);

\fill (4,-1) circle (1pt);

\draw  [fill = gray!20] (4,1) .. controls (4,4) and (0,4) .. (0,0);

\draw [fill=gray!20] (0,0) .. controls (0,-4) and (4,-4) .. (4,-1);

\draw [gray!20, fill=gray!20] (0,0) -- (4,1) -- (4,-1) -- (0,0) ;

\draw (4,1) -- (4,-1);

\draw (1,3.95) -- (5, 1.95) ;

\draw [->] (1.5, 3.75) -- (1.25, 3.25);

\fill (4,1) circle (2pt);

\fill (4,-1) circle (2pt);

\fill (3,2.95) circle (2pt);

\end{tikzpicture}

\caption{Construction of $c(t)$ when $\pi(t)$ is an exposed point of $C$. }
    \end{minipage}\hfill
    \begin{minipage}{0.45\textwidth}
        \centering
\begin{tikzpicture}[scale = 0.6]

\coordinate [label=right: $\pi(0)$] (a) at (4,1);

\coordinate [label=right: $\pi(1)$] (b) at (4,-1);

\fill (4,1) circle (1pt);

\fill (4,-1) circle (1pt);

\draw  [fill = gray!20] (4,1) .. controls (4,4) and (0,4) .. (0,0);

\draw [fill=gray!20] (0,0) .. controls (0,-4) and (4,-4) .. (4,-1);

\draw [gray!20, fill=gray!20] (0,0) -- (4,1) -- (4,-1) -- (0,0) ;

\draw (4,1) -- (4,-1);

\draw (2.5,4) -- (4, 1) -- (5.5,-2);

\fill (4,1) circle (2pt);

\fill (4,-1) circle (2pt);

\end{tikzpicture}

\caption{Leaving some surplus for types close to $t=0$ might be necessary to extract surplus from type $t=1$. }
    \end{minipage}
\end{figure}

\noindent {\bf Example 1 (continued): } First, consider the case of $t\not\in \{0,1\}$, so $t$ is detectable in $T$ and $\pi(t)$ is an exposed point of $C = \mbox{co} \{ \pi(s): s\in T\}$. Then there exists $z(t) \in {\bf R}^S$ such that 
\[
\pi(t) \cdot z(t) = 0
\]
and
\[
\pi(s) \cdot z(t) >0 \ \ \ \forall s\not= t
\]
See Figure 4. Following the idea of the proof for the finite types case, consider a contract of the form $c(t) = v(t) + \alpha(t) z(t)$. For type $t$, this contract has expected cost $v(t)$:
\[
\pi(t) \cdot c(t) = v(t) + \alpha(t) ( \pi(t)\cdot z(t) ) = v(t)
\]
while for types $s\not= t$, the expected cost is
\[
\pi(s) \cdot c(t) = v(t) + \alpha(t) ( \pi(s) \cdot z(t) ) 
\]
As in the finite case, to make this contract unattractive to types $s\not= t$, we would like to set $\alpha(t) >0$ sufficiently large so that
\[
\alpha (t) > \sup_{s\not= t} \frac{v(s) - v(t)}{\pi(s) \cdot z(t)}
\]
The term on the right need not be finite with infinitely many types, however (indeed, both the denominator and the numerator go to 0 as $s\to t$), so this scaling term need not be defined. For types sufficiently close to $t$, however, the difference in values $v(s) - v(t)$ is small, so the surplus such types could gain by choosing the contract $c(t)$ is small, as 
\[
v(s) - \pi(s) \cdot c(t) = v(s) - v(t) - \alpha(t) ( \pi(s)\cdot z(t) ) < v(s) -v(t)
\]
Given $\varepsilon >0$, we can find $\delta >0$ such that $\Vert s-t\Vert < \delta \Rightarrow v(s) - v(t) < \varepsilon$. Since the set of types in $T$ with $\Vert s-t\Vert \geq \delta$ is compact, we can now choose $\alpha(t) >0$ sufficiently large so that
\[
\alpha (t) > \max_{\Vert s-t\Vert \geq \delta } \frac{v(s) - v(t)}{\pi(s) \cdot z(t)}
\]
For the resulting contract $c(t) = v(t) + \alpha(t) z(t) $, 
\[
v(t) - \pi(t) \cdot c(t) = v(t) - v(t) =0
\] 
and
\[
v(s) - \pi(s) \cdot c(t) \leq \varepsilon \ \ \ \forall s\not =t
\]
Now consider $t\in \{ 0,1\}$, and without loss of generality take $t=0$. First note that, as depicted in Figure 3, there exists $z(0) \in {\bf R}^S$ such that
\[
\pi(0) \cdot z(0) = \pi(1) \cdot z(0) = 0
\] 
and
\[
\pi(t) \cdot z(0) >0 \ \ \ \forall t\not\in\{ 0,1\}
\]
Then consider $T_1 = \{ 0,1\}$ and 
\[
C_1 = \{ \alpha \pi(0) + (1-\alpha) \pi(1) : \alpha \in [0,1]\} = \mbox{co} \{ \pi(0), \pi(1)\}
\]
In $C_1$, $\pi(0)$ is an exposed point, thus there exists $z_1(0) \in {\bf R}^S$ such that 
\[
\pi(0) \cdot z_1(0) = 0 \  \mbox{ and } \ \pi \cdot z_1(0) > 0 \ \ \ \forall \pi\in C_1\setminus \{ \pi(0)\}
\]
In particular, $\pi(1) \cdot z_1(0) >0$. See Figures 6 and 7. Mimicking the construction in the previous case for a detectable type, choose $\alpha_1(0) > 0$ sufficiently large so that 
\[
\alpha_1(0) > \frac{v(1) - v(0) } { \pi(1) \cdot z_1(0)}
\]
Then set $c_1(0) = v(0) + \alpha_1(0) z_1(0)$. 

By construction, 
\[
v(0) - \pi(0) \cdot c_1(0) = v(0) -v(0) -\alpha_1(0) ( \pi(0) \cdot z_1(0) ) = 0
\]
and
\[
v(1) -\pi(1) \cdot c_1(0) = v(1) -v(0) -\alpha_1(0) (\pi(1) \cdot z_1(0) ) <0
\]
The contract $c_1(0)$ might leave surplus to some types $t\not\in \{ 0,1\}$, however. To take care of this, we can use the additional stochastic payment $z(0)$, which type 0 and type 1 believe has expected value of zero. Since all other types believe $z(0)$ has positive expected value, we can appropriately scale $z(0)$ to ensure that no more than $\varepsilon$ expected surplus is left for all such types, as follows.

\begin{figure}
\centering
\begin{tikzpicture}[scale = 0.6]

\coordinate [label=right: $\pi(0)$] (a) at (4,1);

\coordinate [label=right: $\pi(1)$] (b) at (4,-1);

\draw (4,1) -- (4,-1);

\fill (4,1) circle (2pt);

\fill (4,-1) circle (2pt);

\end{tikzpicture}

\caption{Graph of $C_1 = \{ \alpha \pi(0) + (1-\alpha) \pi(1) : \alpha \in [0,1]\}$. }
\end{figure}

\begin{figure}
\centering
\begin{tikzpicture}[scale = 0.6]

\coordinate [label=right: $\pi(0)$] (a) at (4,1);

\coordinate [label=right: $\pi(1)$] (b) at (4,-1);

\coordinate [label = left: $z_1(0)$] (c) at (2.75,2.25);

\draw [->] (3.25, 2.50) -- (2.75, 2.25);

\draw (4,1) -- (4,-1);

\draw (3,3) -- (4, 1) -- (4.75,-0.5);

\fill (4,1) circle (2pt);

\fill (4,-1) circle (2pt);

\end{tikzpicture}

\caption{Construction of $c_1(0)$. }
\end{figure}

Note that the surplus $c_1(0)$ leaves to types arbitrarily close to 0 must be arbitrarily small, since $c_1(0)$ leaves zero surplus for type 0. Similarly, the surplus $c_1(0)$ leaves to types arbitrarily close to 1 must be negative, since $c_1(0)$ leaves negative surplus for type 1. Then choose $\delta >0$ such that for $\Vert s-0\Vert < \delta$ or $\Vert s-1\Vert <\delta$,
\[
v(s) - \pi(s) \cdot c_1(0) = v(s) - v(0) -\alpha_1(0) (\pi(s) \cdot z_1(0) )  <\varepsilon
\]
This is possible, by the continuity of $v$ and $\pi$, and the fact that  $v(0) -\pi(0) \cdot c_1(0) = 0$ and $v(1)- \pi(1)\cdot c_1(0) <0$. Then set $\alpha(0) >0$ such that
\[
\alpha(0) > \max_{\substack{\Vert s-0\Vert \geq \delta \\ \Vert s-1\Vert \geq \delta}} \frac{ v(s) - \pi(s) \cdot c_1(0)}{\pi(s) \cdot z(0)}
\]
Then set 
\[
c(0) = c_1(0) + \alpha(0) z(0) = v(0) + \alpha_1(0) z_1(0) + \alpha(0) z(0)
\]
By construction, $\pi(0) \cdot c(0) = \pi(0) \cdot c_1(0)$  and $\pi(1) \cdot c(0) = \pi(1) \cdot c_1(0)$. Thus 
$v(0) - \pi(0) \cdot c(0) = 0$ and $v(1) -\pi(1) \cdot c(0) <0$. 

For $t\not\in \{ 0,1\}$, again by construction, 
\[
v(t) -\pi(t) \cdot c(0) = v(t) -\pi(t) \cdot c_1(0) -\alpha(0) (\pi(t) \cdot z(0) ) < \varepsilon
\]
The collection $\{ c(t) : t\in T\}$ thus constructed achieves extraction of all but at most $\varepsilon$ surplus.  \hfill$\diamondsuit$

\bigskip

Next we will show that an analogous construction works in general whenever probabilistic independence is satisfied to yield a collection of contracts that achieves virtual extraction, provided the state space is finite.  
Three key aspects of the construction in the example lead to the general result. First, for any detectable type there is always a contract that leaves that type with zero expected surplus while leaving at most $\varepsilon$ expected surplus for all other types. Second, although some types might not be detectable, such as types 0 and 1 in the example, probabilistic independence implies that every type is eventually detectable within some (non-singleton) subset of types, in a sense we make precise below. This step of the construction requires a finite state space, while the others hold in general. Finally, this weaker property is sufficient to construct a suitable analogous contract for a type that is not detectable. We formalize this next, using additional results about convex sets and their extreme points that help illuminate the connection between the extraction problems with finitely and infinitely many types. 

We start with a weaker notion of exposed point for a convex set, motivated by the above construction of contracts for the case when types are not detectable. 

\begin{definition}
Let $C \subseteq {\bf R}^k$ be a convex set. A point $x\in C$ is \emph{eventually exposed} if there exists a sequence $\{ F_0, F_1, \ldots ,F_n\}$ of subsets of $C$ such that
\begin{itemize}
  \item[(i)]  $F_0 = C$, $F_n = \{ x\}$
  \item[(ii)] if $n>1$, then $\mbox{ dim } F_{i} \geq 1$ for $i=1,\ldots ,n-1$ 
  \item[(iii)] for each $i$, $F_{i+1} \subseteq F_i$ and $F_{i+1}$ is an exposed set in $F_i$
\end{itemize}
\end{definition}

\noindent {\bf Note:} If $C\subseteq {\bf R}^k$ is a convex set and  $x\in C$ is an exposed point of $C$, then $x$ is also eventually exposed, using the trivial sequence $F_0 = C$  and $F_1 = \{x\}$. 

\noindent {\bf Note:} If $C\subseteq {\bf R}^k$ is compact and convex and $x\in C$ is eventually exposed, then each set $F_i$ is compact and convex. In addition,  $x$ is an exposed point in the set $F_{n-1}$.  

Next, while a convex set can have extreme points that are not exposed, every extreme point in a compact, convex subset of ${\bf R}^k$ is eventually exposed. We include a proof of this result in the appendix for completeness.\footnote{See also e.g. Soltan (2015, Theorem 12.7). We thank Yeon-Koo Che for this reference.}

\begin{theorem}
Let $C$ be a compact, convex subset of ${\bf R}^k$. If $x\in C$ is an extreme point then it must be eventually exposed. 
\end{theorem}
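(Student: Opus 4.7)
The plan is to proceed by induction on $\dim C$. In the base case $\dim C = 0$, we have $C = \{x\}$ and the degenerate one-term sequence $F_0 = C = \{x\}$ satisfies the definition of eventually exposed with $n=0$ (condition (ii) being vacuous). For the inductive step, assume the claim for all compact convex sets of dimension strictly less than $\dim C$, and let $x$ be an extreme point of $C$ with $\dim C \geq 1$.

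Split on whether $x$ is exposed in $C$. If $x$ is exposed, then the sequence $F_0 = C$, $F_1 = \{x\}$ works immediately. Otherwise, since $x$ is extreme and $\dim C \geq 1$, $x$ lies on the relative boundary of $C$, so by the supporting hyperplane theorem (applied within the affine hull of $C$ and then extended to a linear functional on ${\bf R}^S$) there exists a linear functional $f$ with $f(y) \leq f(x)$ for all $y \in C$ and $f$ not constant on $C$. Set
\[
F_1 := \{ y \in C : f(y) = f(x) \}.
\]
Then $F_1$ is a nonempty exposed face of $C$ containing $x$, and since $f$ is not constant on $C$, $F_1$ is a proper face of $C$. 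By Theorem 1, $\dim F_1 < \dim C$. Moreover, $\dim F_1 \geq 1$: otherwise $F_1 = \{x\}$, which would mean $f$ exposes $x$ in $C$, contradicting the case assumption.

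Now $F_1$ is compact and convex with $\dim F_1 < \dim C$, and $x$ remains an extreme point of $F_1$ since it was extreme in the larger set $C$. By the inductive hypothesis, $x$ is eventually exposed in $F_1$, witnessed by a sequence $F_1 = G_0, G_1, \ldots, G_m = \{x\}$ in which each $G_{i+1}$ is an exposed face of $G_i$ and $\dim G_i \geq 1$ for $i < m$. Prepending $C$ gives the sequence
\[
F_0 = C, \ F_1 = G_0, \ G_1, \ \ldots, \ G_m = \{x\}
\]
which satisfies all three conditions of the definition: $F_1$ is an exposed face of $C$ by construction, each $G_{i+1}$ is an exposed face of $G_i$ by the inductive hypothesis, and the dimension condition holds since $\dim C \geq 1$, $\dim F_1 \geq 1$, and the inductive sequence already enforces the dimension lower bound.

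The main delicacy is the use of the supporting hyperplane theorem when $C$ fails to be full-dimensional in ${\bf R}^S$: one must argue inside the affine hull of $C$ and then extend the resulting linear functional to all of ${\bf R}^S$, which is routine. The key quantitative engine driving termination is Theorem 1, which guarantees that each step of the construction strictly drops the dimension, so the induction halts after at most $\dim C + 1$ steps; this also implicitly bounds the length of the sequence $n$ in the definition of eventually exposed.
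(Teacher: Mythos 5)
Your proof is correct and follows essentially the same route as the paper's: both iteratively cut $C$ down by supporting hyperplanes through $x$, producing a chain of exposed faces whose dimensions strictly decrease by Theorem 1, terminating because an extreme point of a low-dimensional compact convex set must be exposed. Your packaging as induction on $\dim C$ (with base case $\dim C = 0$ rather than the paper's termination at $\dim F_n = 1$) and your explicit care with the supporting hyperplane inside the affine hull are cosmetic refinements of the same argument.
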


To illustrate using Example 1, note that while $\pi(0)$ and $\pi(1)$ are not exposed points of $C =\mbox{co} \{ \pi(s) : s\in T \}$, each is eventually exposed. To see this for $\pi(0)$, for example, let $F_0 = C$, $F_1 = \{ \alpha \pi(0) + (1-\alpha) \pi(1): \alpha \in [0,1]\}$, and $F_2 = \{\pi(0)\}$. Then $F_1\subseteq C$ is an exposed set in $C$ with $\mbox{ dim } F_1 =1$, and $\pi(0) \in F_1$ is an exposed point of $F_1$.  

Returning to the extraction problem, these definitions and results suggest analogous conditions connecting types to beliefs, and the geometry of the set of beliefs, as with the connection between exposed points and detectable types. First, we extend the notion of a detectable type to a set of types in the natural way. 

\begin{definition}
A set of types $T^*\subseteq T$ is \emph{detectable} in $T$  if there exists $z\in C(S)$ such that 
\begin{eqnarray*}
\pi(t) \cdot z &=& 0 \ \ \forall t\in T^*\\
\pi(t) \cdot z &>& 0 \ \ \forall t\in T\setminus T^*
\end{eqnarray*} 
\end{definition}

Next, in analogy with the connection between exposed points and eventually exposed points, there is a natural notion of eventually detectable types. 

\begin{definition}
A type $t^*\in T$ is \emph{eventually detectable} if there is a nested sequence of compact sets $\{ T_0, T_1, \ldots, T_n\}$ with $T_0 = T$, $T_n= \{ t^*\}$, and $T_{i+1} \subseteq T_i$ for each $i$, and a corresponding sequence $\{ z_1, \ldots , z_n\} \subseteq C(S)$ such that 
for each $i=1,\ldots ,n$, 
  \begin{eqnarray*}
\pi(t) \cdot z_i &=& 0 \ \ \forall t\in T_i\\
\pi(t) \cdot z_i &>& 0 \ \ \forall t\in T_{i-1}\setminus T_i
\end{eqnarray*} 
\end{definition}

\noindent {\bf Note:} A type $t^*\in T$ is eventually detectable if there is a nested sequence of compact sets $\{ T_0, T_1, \ldots ,T_n\}$, with $T_0=T$, $T_n = \{ t^*\}$, and $T_{i+1} \subseteq T_i$ for each $i$ such that $T_{i+1}$ is detectable in $T_i$ for each $i$. 

\noindent {\bf Note: } If $t^*\in T$ is detectable, then it is eventually detectable, using the trivial sequence $T_0 = T$, $T_1 = \{ t^*\}$. 

Illustrating using Example 1 again, note that the type $t=0$ is not detectable, but is eventually detectable, using the sequence $T_0 = T$, $T_1 = \{ 0,1\}$, $T_2 = \{0\}$. 

The following lemma provides two key results that connect these ideas to the extraction problem. First, when a type $t^*$ is detectable in some compact subset of types $T^*$, then for any $\varepsilon >0$ a contract can be constructed that leaves type $t^*$ zero expected surplus while leaving the other types in $T^*$ expected surplus at most $\varepsilon$. Second, building on this observation, if a type is eventually detectable, such a contract can be constructed to leave expected surplus at most $\varepsilon$ for all types in $T$. 

\begin{lemma}
Let $v:T\to {\bf R}$ be given. 
\begin{itemize}
  \item[(i)] Let $T^*\subseteq T$ be compact and $t^* \in T^*$. If $t^*$ is detectable in $T^*$, then for every $\varepsilon >0$ there exists $c(t^*) \in C(S)$ such that $v(t^*) - \pi(t^*)\cdot c(t^*) = 0$ and $v(t) - \pi(t) \cdot c(t^*) <\varepsilon$ for all $t\in T^*\setminus \{ t^*\}$. 
  \item[(ii)]  If $t^* \in T$ is eventually detectable, then for every $\varepsilon >0$ there exists $c(t^*)\in C(S)$ such that $v(t^*) - \pi(t^*)\cdot c(t^*) = 0$ and $v(t) - \pi(t) \cdot c(t^*) <\varepsilon$ for all $t\in T\setminus \{ t^*\}$. 
\end{itemize} 
 \end{lemma}
\begin{proof}
For (i), since $t^*$ is detectable in $T^*$, choose $z(t^*)\in C(S)$ such that 
\begin{eqnarray*}
\pi(t^*) \cdot z(t^*) &=& 0\\
\pi(t) \cdot z(t^*) &>& 0 \ \forall t\in T^*\setminus \{t^*\}
\end{eqnarray*}
Fix $\varepsilon >0$. Choose $\delta >0$ such that $\Vert t-t^* \Vert <\delta \Rightarrow \Vert v(t) - v(t^*) \Vert <\varepsilon$. Set $\alpha(t^*) >0$ such that 
\[
\alpha(t^*) > \max_{\{ t\in T^*: \Vert t-t^*\Vert \geq \delta \}} \frac{v(t)-v(t^*)}{\pi(t)\cdot z(t^*)}
\]
Note that $\alpha(t^*)$ is well-defined (i.e., finite) since $\{ t\in T^*: \Vert t-t^*\Vert \geq \delta\}$ is compact,  $v$ and $\pi$ are continuous, and $\pi(t)\cdot z(t^*) >0$ for all $t\in T^*\setminus\{ t^*\}$. 

Then define a contract $c(t^*) \in C(S)$ as follows (here and in what follows, we use the constant $r\in {\bf R}$ interchangeably with the constant function  $r{\bf 1}(S)$):
\[
c(t^*) = v(t^*) + \alpha(t^*)z(t^*)
\]

Now for type $t^*$:
\[
v(t^*) - \pi(t^*)\cdot c(t^*) = v(t^*) - v(t^*) -\alpha(t^*)  ( \pi(t^*)\cdot z(t^*) ) = 0
\]

For types $t\in T^*\setminus \{t^*\}$, first suppose $\Vert t-t^*\Vert \geq \delta$. Then
\[
v(t) - \pi(t)\cdot c(t^*) = v(t) - v(t^*) -\alpha(t^*) ( \pi(t)\cdot z(t^*) ) < 0 
\]
by definition of $\alpha(t^*)$. If instead $\Vert t-t^* \Vert <\delta$, then 
\begin{eqnarray*}
v(t) - \pi(t)\cdot c(t^*) &=& v(t) - v(t^*) -\alpha(t^*) ( \pi(t)\cdot z(t^*) )\\
&<& v(t) - v(t^*) \ \mbox{ since } \pi(t)\cdot z(t^*) >0 \mbox{ and } \alpha(t^*) >0 \\
&<& \varepsilon  \mbox{ since } \Vert t-t^* \Vert <\delta
\end{eqnarray*}
Thus for all types $t\in T^*\setminus \{t^*\}$, $v(t) -\pi(t)\cdot c(t^*) <\varepsilon$. This establishes the claim in (i). 

For (ii), since $t^*$ is eventually detectable in $T$, there exists a nested sequence of compact sets $\{ T_0, T_1, \ldots, T_n\}$ with $T_0 = T$, $T_n= \{ t^*\}$, and $T_{i+1} \subseteq T_i$ for each $i$, and a corresponding sequence $\{ z_1, \ldots , z_n\} \subseteq C(S)$ such that 
for each $i=1,\ldots ,n$, 
  \begin{eqnarray*}
\pi(t) \cdot z_i &=& 0 \ \forall t\in T_i\\
\pi(t) \cdot z_i &>& 0 \ \forall t\in T_{i-1}\setminus T_i
\end{eqnarray*} 

Fix $\varepsilon >0$. Now consider $T_{n-1}$ and $t^*$. Since $t^*$ is detectable in $T_{n-1}$, by (i) there is a contract $c_{n-1}(t^*)$ such that 
\[
v(t^*) - \pi(t^*)\cdot c_{n-1}(t^*) = 0
\]
and
\[
v(t) - \pi(t) \cdot c_{n-1}(t^*) < \frac{1}{n} \varepsilon \ \ \forall t\in T_{n-1}
\]
Now we claim, by induction, that for each $k=0, \ldots ,n-1$, there is a contract $c_k(t^*)$ such that 
\[
v(t^*) - \pi(t^*) \cdot c_k(t^*) =0
\]
and
\[
v(t) - \pi(t) \cdot c_k(t^*) < \frac{n-k}{n} \varepsilon \ \ \forall t\in T_k
\]
To see this, fix $i\geq 0$ and suppose there exists $c_i(t^*)$ such that 
\[
v(t^*) - \pi(t^*) \cdot c_i(t^*) =0
\]
and
\[
v(t) - \pi(t) \cdot c_i(t^*) < \frac{n-i}{n} \varepsilon \ \ \forall t\in T_i
\]
Then we claim that there exists $c_{i-1}(t^*)$ such that 
\[
v(t^*) - \pi(t^*) \cdot c_{i-1}(t^*) = 0
\]
and
\[
v(t) - \pi(t) \cdot c_{i-1}(t^*) < \frac{n-(i-1)}{n} \varepsilon \ \ \forall t \in T_{i-1}
\]
To show this, first note that for each $t\in T_i$, there exists $\delta(t) > 0$ such that $\Vert s-t\Vert < \delta(t) \Rightarrow $
\[
\Vert v(t) - \pi(t) \cdot c_i(t^*) - [v(s) - \pi(s) \cdot c_i (t^*)]\Vert  < \frac{1}{n}\varepsilon
\]
For each $t \in T_i$, let $B_{\delta(t)}(t) = \{ s\in T: \Vert s - t\Vert < \delta(t)\}$. 
The collection $\{ B_{\delta(t)}(t): t\in T_i \}$ is an open cover of the compact set $T_i$, so there exists $\{ t^1, \ldots ,t^m\} \subseteq T_i$ such that $T_i \subseteq \cup_j B_{\delta(t^j)}(t^j)$. Moreover, $\cup_j B_{\delta(t^j)}(t^j)$ is open, so $T_{i-1}\setminus \cup_j B_{\delta(t^j)}(t^j)$ is compact. Choose $R_i>0$ such that
\[
R_i > \max_{t\in T_{i-1}\setminus \cup_j B_{\delta(t^j)}(t^j)} v(t) -\pi(t) \cdot c_i(t^*)
\]
and choose $\alpha_i(t^*)>0$ such that 
\[
\alpha_i(t^*) > \max_{t\in T_{i-1}\setminus \cup_j B_{\delta(t^j)}(t^j)} \frac{R_i}{\pi(t) \cdot z_i(t^*)}
\]
Then set 
\[
c_{i-1}(t^*) = c_i(t^*) + \alpha_i(t^*) \cdot z_i(t^*)
\]
For each $t\in T_i$:
\begin{eqnarray*}
v(t) -\pi(t) \cdot c_{i-1}(t^*) &=& v(t) - \pi(t) \cdot c_i(t^*)\\
&<& \frac{n-i}{n}\varepsilon\\
&<& \frac{n-(i-1)}{n} \varepsilon
\end{eqnarray*}
and
\[
v(t^*) - \pi(t^*) \cdot c_{i-1}(t^*) = v(t^*) -\pi(t^*) \cdot c_i(t^*) = 0
\]
For $t\in T_{i-1}\setminus T_i$:
\begin{itemize}
  \item if $t\in \cup_j B_{\delta(t^j)}(t^j)$:
  \begin{eqnarray*}
v(t) - \pi(t)\cdot c_{i-1}(t^*) &=& v(t) -\pi(t) \cdot c_i(t^*) - \alpha_i (t^*) ( \pi(t) \cdot z_i(t^*) ) \\
&<& v(t) - \pi(t) \cdot c_i(t^*) \ \ \mbox{ since } \pi(t)\cdot z_i(t^*) >0\\
&<& \frac{n-i}{n}\varepsilon + \frac{1}{n} \varepsilon  \ \ \mbox{ by construction }\\
& = & \frac {n-(i-1)}{n} \varepsilon 
\end{eqnarray*}

\item if $t\in T_{i-1} \setminus \cup_j B_{\delta(t^j)}(t^j)$: 
\begin{eqnarray*}
v(t) -\pi(t) \cdot c_{i-1}(t^*) &=& v(t) -\pi(t) \cdot c_i(t^*) - \alpha_i(t^*) ( \pi(t) \cdot z_i(t^*) )\\
&<& 0 \ \ \ \mbox{ by choice of } \alpha_i(t^*)
\end{eqnarray*}
\end{itemize}
Thus for all $t\in T_{i-1}\setminus \{ t^*\}$, 
\[
v(t) -\pi(t) \cdot c_{i-1}(t^*) < \frac{n-(i-1)}{n} \varepsilon
\]
and
\[
v(t^*) -\pi(t^*) \cdot c_{i-1}(t^*) = 0
\]
Thus by induction, for each $k$ there exists a contract $c_k(t^*) $ such that 
for all $t\in T_k \setminus \{ t^*\}$, 
\[
v(t) -\pi(t) \cdot c_k(t^*) < \frac{n-k}{n} \varepsilon
\]
and
\[
v(t^*) -\pi(t^*) \cdot c_k(t^*) = 0
\]
In particular, consider $k=0$: there exists $c(t^*) = c_0(t^*)$ such that 
 for all $t\in T_0\setminus \{ t^*\}= T\setminus \{ t^*\}$, 
\[
v(t) -\pi(t) \cdot c(t^*) < \frac{n}{n} \varepsilon = \varepsilon
\]
and
\[
v(t^*) -\pi(t^*) \cdot c(t^*) = 0
\]
This establishes the claim.
\end{proof}

From Lemma 1, it follows immediately that virtual extraction holds whenever all types are eventually detectable. We show next that when $S$ is finite and types satisfy probabilistic independence, then while types need not be detectable, every type is eventually detectable, which in turn guarantees that virtual extraction holds. 

\begin{theorem}

\begin{itemize}
  \item[(i)] If every type is eventually detectable, then virtual extraction holds.
  \item[(ii)] Let $S$ be finite. If types satisfy probabilistic independence, then every type is eventually detectable. 
\end{itemize} 
\end{theorem}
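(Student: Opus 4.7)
The plan is to prove the two parts in order, with (i) essentially a repackaging of Case 2 of the proof of Theorem 4 and (ii) following from Theorem 5 applied to $C$ by pulling an eventually-exposed sequence back to $T$ via the map $\pi$.

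For part (i), I would fix a type $t^*\in T$ and let $\{T_0,\ldots,T_n\}$ and $\{z_1,\ldots,z_n\}$ be the sequence witnessing that $t^*$ is eventually detectable. The key observation is that this is precisely the structural input used in Case 2 of the proof of Theorem 4; the sequences were constructed there as $T_i=\{t\in T_{i-1}:\pi(t)\cdot z_i=0\}$ from faces $C_i$ of $C$, but the inductive step only ever used (a) compactness of each $T_i$, (b) the sign conditions $\pi(t)\cdot z_i=0$ on $T_i$ and $\pi(t)\cdot z_i>0$ on $T_{i-1}\setminus T_i$, and (c) continuity of $v$ and $\pi$. So I would mirror that induction: starting from the innermost set $T_{n-1}$, where $t^*$ is detectable within $T_{n-1}$ via $z_n$, apply the Case 1 construction to obtain a contract $c_{n-1}(t^*)$ extracting full surplus from $t^*$ and leaving no more than $\tfrac{1}{n}\varepsilon$ surplus to every $t\in T_{n-1}$. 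Then, assuming by induction that $c_i(t^*)$ extracts full surplus from $t^*$ and leaves at most $\tfrac{n-i}{n}\varepsilon$ surplus on $T_i$, form $c_{i-1}(t^*)=c_i(t^*)+\alpha_i z_i$ with $\alpha_i>0$ chosen via a finite subcover of $T_i$ and a uniform bound on $v(t)-\pi(t)\cdot c_i(t^*)$ outside this subcover (identical to the Case 2 bookkeeping). Since $\pi(t)\cdot z_i=0$ for $t\in T_i$, the new contract is unchanged in expected cost on $T_i$, and the excess on $T_{i-1}\setminus T_i$ can be killed because $\pi(t)\cdot z_i>0$ there. Taking $i=1$ yields a contract $c(t^*)$ satisfying $v(t^*)-\pi(t^*)\cdot c(t^*)=0$ and $v(t)-\pi(t)\cdot c(t^*)<\varepsilon$ for $t\neq t^*$. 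Doing this for every $t^*\in T$ delivers the desired menu.

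For part (ii), assume probabilistic independence and fix $t\in T$. By the note following Definition 6, $\pi(t)$ is an extreme point of $C=\overline{\mathrm{co}}\{\pi(s):s\in T\}$, which is a compact convex subset of $\mathbf{R}^S$. By Theorem 5, $\pi(t)$ is eventually exposed in $C$: there exists a sequence $F_0=C,F_1,\ldots,F_n=\{\pi(t)\}$ with each $F_{i+1}$ an exposed face of $F_i$. For each $i$, pick $z_i\in\mathbf{R}^S$ exposing $F_i$ within $F_{i-1}$, normalized so that $\pi\cdot z_i=0$ on $F_i$ and $\pi\cdot z_i>0$ on $F_{i-1}\setminus F_i$. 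Define $T_i:=\{s\in T:\pi(s)\in F_i\}$; by continuity of $\pi$ each $T_i$ is compact, $T_0=T$, and the $T_i$ are nested. The sign conditions transfer directly: for $t'\in T_{i-1}$ we have $\pi(t')\in F_{i-1}$, so $\pi(t')\cdot z_i\ge 0$, with equality precisely when $\pi(t')\in F_i$, i.e.\ when $t'\in T_i$.

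The one nonobvious step, and the main potential obstacle, is showing $T_n=\{t\}$; a priori $T_n=\{s\in T:\pi(s)=\pi(t)\}$ could be larger. Here probabilistic independence closes the gap: if $\pi(s)=\pi(t)$, then $\pi(t)=\int \pi(u)\,\delta_s(du)$, so probabilistic independence forces $\delta_s=\delta_t$, hence $s=t$. Therefore $\pi$ is injective on $T$ and $T_n=\{t\}$. Combining, the sequences $\{T_i\}$ and $\{z_i\}$ witness that $t$ is eventually detectable. Finally, combining (ii) with (i) recovers Theorem 4 as a corollary, which provides a useful internal consistency check.
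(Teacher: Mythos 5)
Your proposal is correct and follows essentially the same route as the paper: part (i) by rerunning the Case 2 induction from the proof of Theorem 4 using only compactness, the sign conditions on the $z_i$, and continuity of $v$ and $\pi$; part (ii) by applying Theorem 5 to $C$ and pulling the eventually-exposed sequence back to $T$ via $T_i=\{s\in T:\pi(s)\in F_i\}$. Your explicit observation that probabilistic independence forces $\pi$ to be injective (so that $T_n=\{s:\pi(s)=\pi(t)\}=\{t\}$) fills in a step the paper leaves implicit, and is a worthwhile addition.
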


\begin{proof}
Part (i) follows from part (ii) of Lemma 1. For (ii), let $t^*\in T$. By probabilistic independence, $\pi(t^*)$ is an extreme point of $C = \mbox{co} \{ \pi(s) : s\in T \}$. Since $\{ \pi(s) : s\in T\} \subseteq {\bf R}^S$ is compact, its convex hull is closed, thus $C = \overline{\mbox{co}} \{ \pi(s) : s\in T \}$ and $C$ is compact. If $t^*$ is not detectable, then $\pi(t^*)$ is an extreme point of $C$ that is not exposed. Hence by Theorem 4, $\pi(t^*)$ is eventually exposed in $C$. Thus there is a sequence $\{ F_0,F_1, \ldots ,F_n\}$ with $F_0 =C$, $F_n = \{ \pi(t^*)\}$, and a corresponding sequence  $\{ z_1, \ldots ,z_n\}$ such that for each $i$, $\pi(t^*) \in F_i$ and
\begin{eqnarray*}
\pi\cdot z_i &=& 0 \ \ \forall \pi \in F_i\\
\pi\cdot z_i &>& 0 \ \ \forall \pi \in F_{i-1}\setminus F_i
\end{eqnarray*}
Set $T_0 = T$, and 
\[
T_1 = \{ t\in T: \pi(t) \in F_1\}
\]
Note that since $T$ and $F_1$ are compact and the map $t\mapsto \pi(t)$ is continuous, $T_1$ is compact. Then for each $i\geq 2$, set 
\[
T_i = \{ t\in T_{i-1}: \pi(t) \in F_i\}
\]
By induction, $T_i$ is compact for each $i$, and by construction $T_i \subseteq T_{i-1}$ for each $i$. Also by construction, 
\[
T_n = \{ t\in T_{n-1}: \pi(t) \in F_n \} = \{ \pi(t^*) \} = \{ t^*\}
\]
Then let $m$ be the minimum index $i$ for which $T_i = \{ t^*\}$. Consider the sequence $\{ T_0, T_1, \ldots ,T_m\}$ and the corresponding sequence $\{ z_1, \ldots, z_m\}$. Then for each $i=1,\ldots ,m$,
\begin{eqnarray*}
\pi(t)\cdot z_i &=& 0 \ \ \forall t\in T_i\\
\pi(t) \cdot z_i &>& 0 \ \ \forall t\in T_{i-1}\setminus T_i
\end{eqnarray*}
Thus $t^*$ is eventually detectable.
\end{proof}

\bigskip

As Example 1 illustrated, full extraction requires that all types are detectable. With infinitely many types however, this is not sufficient for full extraction. We close this section by noting a stronger condition which is sufficient for full extraction in general, and then see probabilistic independence implies this stronger condition with finitely many types.\footnote{ See McAfee and Reny (1992) for a discussion of several other sufficient conditions for full extraction when $T$ is infinite.}  

\begin{definition}
A type $t^* \in T$ is \emph{strongly detectable} if there exists $z\in C(S)$ such that 
\[
\pi(t^*) \cdot z = 0
\]
and
\[
\inf_{t\not= t^*} \pi(t) \cdot z >0
\]
\end{definition}

\bigskip

Note that if all types are strongly detectable, then full extraction is possible for any $t\mapsto v(t)$, and any type set $T$. This follows from the basic argument for the finite case, as sketched and adapted above. Also note that when $T$ is finite and types satisfy probabilistic independence, which is equivalent to convex independence because $T$ is finite, then all types are strongly detectable. We record these observations below.

\begin{theorem}
\begin{itemize}
  \item[(i)]  If every type is strongly detectable, then full extraction holds for any $v:T\to {\bf R}$. 
  \item[(ii)] If $T$ is finite and types satisfy probabilistic independence, then every type is strongly detectable. 
\end{itemize}
\end{theorem}
\begin{proof}
Part (i) is a straightforward adaptation of arguments above; we omit the details. For (ii), fix $t\in T$. Types satisfy probabilistic independence, so $\pi(t) \not\in \overline{\mbox{co}} \{ \pi(s) : s\in T, s\not= t\}$. Since $T$ is finite, $\overline{\mbox{co}} \{ \pi(s) : s\in T, s\not= t\}$ is compact. Thus there exists $z(t) \in C(S)$ such that 
$\pi(t) \cdot z(t) = 0$ and $\pi(s) \cdot z(t) >0$ for all $s\in T, \ s\not= t$. Since $T$ is finite, this implies $\inf_{s\not= t} \pi(s) \cdot z(t) >0$. Thus $t$ is strongly detectable by definition. Repeating for each $t\in T$ yields the result.  
\end{proof}

\section{Duality and General State Space}

In this section we consider the general case in which the state space $S$ is an arbitrary compact metric space. For example, this allows the model to accommodate the case in which $S=T^n$ for some $n$, or more generally when the cardinality of $S$ is greater than or equal to the cardinality of $T$.  

We start by again considering the problem of full extraction, and cast the problem in slightly stronger terms (we will see that while this gives a stronger condition, probabilistic independence guarantees that virtual extraction holds under this stronger condition). Rather than looking for a collection of contracts $\{ c(t) \in C(S) : t\in T\}$, we add the requirement that the contracts also be jointly continuous in types and states, and thus consider the existence of a schedule of contracts
$c\in C(T\times S)$ such that for each $t\in T$:
\begin{eqnarray*}
v(t) - \pi(t) \cdot c(t) &=& 0 \\
v(t) - \pi(t)\cdot c(s) &\leq & 0 \ \ \ \forall s\not= t
\end{eqnarray*}
Considering contracts $c\in C(T\times S)$ is  useful because this helps ensure continuity in several constructions below, and because $C(T\times S)$ is a separable Banach space, which is used in a number of steps below. 

First, we note that full extraction is equivalent to the seemingly weaker condition {\it weak full extraction}: for each $t\in T$,
\begin{eqnarray*}
v(t) - \pi(t) \cdot c(t) &\geq& 0 \\
v(t) - \pi(t)\cdot c(s) &\leq & 0 \ \ \ \forall s\not= t
\end{eqnarray*}
We establish this equivalence in the lemma below. 

\begin{lemma}
For each $v:T\to {\bf R}$,  $c\in C(T\times S)$ satisfies full extraction if and only if $c$ satisfies weak full extraction.
\end{lemma}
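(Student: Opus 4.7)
The plan is to show the equivalence directly on the same contract $c$. One direction — full extraction implies weak full extraction — is immediate, since the equality $v(t) = \pi(t)\cdot c(t)$ trivially implies $v(t)\geq \pi(t)\cdot c(t)$ and the second inequality is identical in both notions.

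For the converse, suppose $c\in C(T\times S)$ satisfies weak full extraction, and fix any $t\in T$. Under the standing assumption that $T$ is a nondegenerate compact convex metric space (in particular $T = [0,1]$), every point is a limit of other points, so I can choose a sequence $\{t_n\}\subseteq T\setminus\{t\}$ with $t_n\to t$. Applying the second weak-extraction inequality at type $t_n$ with the contract $c(t)$ (which is admissible since $t\ne t_n$) gives
\[
v(t_n)\leq \pi(t_n)\cdot c(t).
\]
Now I would let $n\to\infty$. The left-hand side tends to $v(t)$ by continuity of $v$. On the right, since $c(t)\in C(S)$ is bounded and $\pi:T\to \Delta(S)$ is norm continuous,
\[
|\pi(t_n)\cdot c(t) - \pi(t)\cdot c(t)| \leq \|c(t)\|_\infty \,\|\pi(t_n) - \pi(t)\|_{TV} \to 0,
\]
so $\pi(t_n)\cdot c(t) \to \pi(t)\cdot c(t)$. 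Thus $v(t)\leq \pi(t)\cdot c(t)$. Combined with the first weak-extraction inequality $v(t)\geq \pi(t)\cdot c(t)$ (applied directly at $t$), this gives $v(t)=\pi(t)\cdot c(t)$, which is precisely the first full-extraction condition. Since the second condition is shared between the two notions, $c$ satisfies full extraction.

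The only real subtlety is producing the approximating sequence $t_n\to t$ with $t_n\neq t$. This is automatic for $T=[0,1]$, and more generally whenever $T$ is a compact convex metric space containing more than one point, since a convex subset of a normed space cannot have isolated points unless it is a singleton. I would expect this to be the main (minor) obstacle to watch for; the rest of the argument is a one-line continuity/limit computation leveraging the standing continuity assumptions on $v$ and $\pi$.
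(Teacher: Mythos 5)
Your proof is correct and follows essentially the same route as the paper: approximate $t$ by distinct types $t_n\to t$, pass to the limit in the off-diagonal inequality using continuity of $v$ and norm continuity of $\pi$, and combine with the on-diagonal inequality to get equality. The paper justifies the existence of the approximating sequence by connectedness of $T$ rather than convexity, but this is an immaterial difference.
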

\begin{proof}
Full extraction clearly implies weak full extraction. To see that these are equivalent, fix $v:T\to {\bf R}$ and  $t\in T$. Then choose a sequence $s_n\to t$ with $s_n \not= t$ for each $n$; this is possible by the connectedness of $T$.  Suppose $c\in C(T\times S)$ satisfies weak full extraction. Then for each $n$, since $s_n \not= t$,  
\[
v(s_n) -\pi(s_n) \cdot c(t) \leq 0
\]
and
\[ 
v(t) - \pi(t) \cdot c(t) \geq 0
\]
Since $v(s_n)\to v(t)$ and $\pi(s_n)\to \pi(t)$, 
\[
v(t) - \pi(t) \cdot c(t) \leq 0
\]
Thus 
\[
v(t) -\pi(t) \cdot c(t) = 0
\]
Since $t\in T$ was arbitrary, the equivalence follows.
\end{proof}

From Lemma 2, it is enough to consider the relaxed problem of weak full extraction.\footnote{Lemma 2 uses the assumption that $T$ is connected. If $T$ is an arbitrary compact metric space, it is not difficult to show instead that if $c\in C(T\times S)$ satisfies weak full extraction, then there exists $c'\in C(T\times S)$ satisfying full extraction; see Lemma 9 in the Appendix. } Now write 
\[
c(t) = v(t) + z(t) 
\]
where $z\in C(T\times S)$, and we use $v(t) \in {\bf R}$ interchangeably with $v(t) {\bf 1}(S)$, where ${\bf 1}(S)$ denotes the identity on $S$. Note that any $c\in C(T\times S)$ can be written this way for appropriate choice of $z$. Then $c$ satisfies full extraction if and only if for each $t\in T$, 
\begin{eqnarray*}
\pi(t) \cdot z(t) &\leq & 0 \\
\pi(t) \cdot z(s) &\geq & v(t) - v(s) \ \ \  \forall s\not= t
\end{eqnarray*}
This follows from observing that for each $t$,
\[
v(t) - \pi(t) \cdot c(t) = v(t) - v(t) - \pi(t) \cdot z(t) = -\pi(t) \cdot z(t) 
\]
and
\[
v(t) - \pi(t) \cdot c(s) = v(t) - v(s) - \pi(t) \cdot z(s) 
\]
Thus we will consider the existence of $z\in C(T\times S)$ such that for each $t\in T$, 
\begin{eqnarray*}
\pi(t) \cdot z(t) &\leq & 0 \\
\pi(t) \cdot z(s) &\geq & v(t) - v(s) \ \ \  \forall s\not= t
\end{eqnarray*}

To that end, let $f:T\times C(T\times S) \to {\bf R}$ be given by 
\[
f(t,z) =  \pi(t) \cdot z(t)
\]
and for each $t\in T$, let $f_t:C(T\times S) \to {\bf R}$ be given by $f_t(z) = f(t,z)$. Similarly, let $g:T\times T\times C(T\times S)\to {\bf R}$ be given by 
\[
g_{(s,t)}(z) := g(s,t,z) =  \pi(s) \cdot z(t)
\]

Then to show that full extraction is possible, it suffices to show that there exists $z\in C(T\times S)$ such that 
\begin{eqnarray*}
f(t,z) & \leq & 0 \ \ \forall t\in T\\
g(t,s,z) &\geq & v(t) - v(s) \ \ \forall s\not= t, \ \forall t\in T
\end{eqnarray*}

\noindent {\bf Note: } For each $t\in T$, $f_t$ is convex, and for every $t,s \in T$, $g_{(t,s)}$ is concave. (Both are in fact linear.) In addition, $f(t,0) = g(t,s,0) = 0$ for every $t,s\in T$. 

\bigskip

Now consider the problem 

\begin{equation}
\begin{aligned}
p^* :=& \underset{c\in {\bf R}, \; z\in C(T\times S)}{\text{inf}}
& & c \\
& \text{subject to}
& & f(t,z) \leq c \ \ \ \forall t\in T \\
&&& v(t) - v(s) -g(t,s,z) \leq c \ \ \ \forall t,s \in T
\end{aligned}
\tag{vse}
\end{equation}

Note that if the optimal value $p^*$ of this problem is less than or equal to zero, then at least virtual surplus extraction is possible. We establish this in the next lemma. 

\begin{lemma}
If $p^*\leq 0$, then virtual surplus extraction holds. If $p^*<0$, or if $p^*=0$ and is attained in (vse), then full extraction holds. 
\end{lemma}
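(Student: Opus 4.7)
My plan is to translate any feasible (or near‑feasible) pair $(c,z)$ of (vse) back into a menu via the substitution $c(t) = v(t) + z(t)$, using the identities $v(t) - \pi(t)\cdot c(t) = -\pi(t)\cdot z(t)$ and $v(t) - \pi(t)\cdot c(s) = v(t) - v(s) - \pi(t)\cdot z(s)$ recorded immediately before the lemma. Under this substitution the two constraint families of (vse) become two‑sided bounds on the equilibrium surplus of each type and on the surpluses obtained by misreporting.

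For the full‑extraction claim I would first take the attained case. If $p^* = 0$ is attained at some $(0, z^*)$, then (vse) gives $\pi(t)\cdot z^*(t) \leq 0$ and $v(t) - v(s) - \pi(t)\cdot z^*(s) \leq 0$, which translate exactly into the weak full extraction inequalities $v(t) - \pi(t)\cdot c(t) \geq 0$ and $v(t) - \pi(t)\cdot c(s) \leq 0$ for $s\neq t$. Lemma 1 then upgrades weak full extraction to full extraction. The case $p^* < 0$ is essentially identical: any feasible $(c, z)$ with $c < 0$ yields the same (strict) inequalities, and Lemma 1 applies again.

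For virtual extraction with $p^* \leq 0$, I would fix $\varepsilon > 0$ and pick a feasible $(\hat c, z)$ with $\hat c \leq \varepsilon/2$. The first constraint of (vse) gives $\pi(t)\cdot z(t) \leq \varepsilon/2$; the complementary bound $-\pi(t)\cdot z(t) \leq \varepsilon/2$ comes from the second constraint at $s=t$, or (if one reads (vse) as being over $s\neq t$) from the limit $s_n\to t$, $s_n\neq t$, using continuity of $v$, $\pi$, and $z$, in the same spirit as Lemma 1. Because $\pi(t)$ is a probability measure, shifting $z$ by the constant $-\varepsilon/2$ and defining $c_\varepsilon(t)(s) := v(t) + z(t,s) - \varepsilon/2 \in C(S)$ drives $v(t) - \pi(t)\cdot c_\varepsilon(t) = -\pi(t)\cdot z(t) + \varepsilon/2$ into $[0,\varepsilon]$ and bounds $v(t) - \pi(t)\cdot c_\varepsilon(s)$ by $\hat c + \varepsilon/2 \leq \varepsilon$, which is exactly virtual extraction with tolerance $\varepsilon$.

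The main obstacle I anticipate is securing the lower bound $-\pi(t)\cdot z(t) \leq \hat c$ that makes the single constant shift work. The upper bound $\pi(t)\cdot z(t) \leq \hat c$ is immediate from $f(t,z) \leq \hat c$, but the lower bound has to be extracted either from the second constraint at $s = t$ or, more delicately, by the continuity limit used in Lemma 1. Once this two‑sided control of $\pi(t)\cdot z(t)$ is in hand, the rest is a single constant translation plus a one‑line verification of the two surplus inequalities, and no further approximation is required.
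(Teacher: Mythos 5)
Your proposal is correct and follows essentially the same route as the paper: translate an ($\varepsilon$-)feasible point of (vse) into a menu via $c(t)=v(t)+z(t)$, obtain the two-sided control of $\pi(t)\cdot z(t)$ from the two constraint families (the $s=t$ instance of the second constraint, or the continuity limit, supplying the lower bound exactly as you anticipate), and shift by a constant so the equilibrium surplus lands in $[0,\varepsilon]$. The only cosmetic differences are your explicit appeal to Lemma 1 in the full-extraction cases and your $\varepsilon/2$ calibration, where the paper is content with a $2\varepsilon$ bound.
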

\begin{proof}
 To see this, first suppose $p^*<0$. Then there must exist $z\in C(T\times S)$ such that 
\[
f(t, z) \leq 0 \ \ \ \forall t\in T
\]
and
\[
v(t) - v(s) - g(t,s,z) \leq 0 \ \ \ \forall t,s\in T
\]
In this case, full extraction holds, and thus a fortiori, virtual extraction holds as well. Similarly, if $p^* = 0$, then either $p^*$ is attained, in which case again there must exist such a $z\in C(T\times S)$ as above so that full extraction holds, or if $p^*$ is not attained, then for each $\varepsilon >0$ there exists $z_\varepsilon \in C(T\times S)$ such that 
\[
f(t, z_\varepsilon) \leq \varepsilon \ \ \ \forall t\in T
\]
and
\[
v(t) - v(s) - g(t,s,z_\varepsilon) \leq \varepsilon \ \ \ \forall t,s\in T
\]
Now note that setting $s=t$, this implies
\[
\pi(t) \cdot z_\varepsilon (t)  \geq -\varepsilon \ \ \ \forall t\in T
\]
Thus
\[
-\varepsilon \leq \pi(t) \cdot z_\varepsilon (t) \leq \varepsilon \ \  \forall t\in T
\]
Now set 
\[
z:= z_\varepsilon - \varepsilon
\]
and for each $t\in T$, set the contract $c(t)$ to be 
\[
c(t) = v(t) + z(t)
\]
Then $c\in C(T\times S)$, and for each $t\in T$, 
\begin{eqnarray*}
v(t) - \pi(t) \cdot c(t) &=& v(t) - v(t) -\pi(t) \cdot z(t) \\
&=& -\pi(t) \cdot z(t) \\
&=& \varepsilon - \pi(t) \cdot z_\varepsilon (t)
\end{eqnarray*}
and by the preceding argument,
\[
0 \leq \varepsilon - \pi(t) \cdot z_\varepsilon (t)    \leq 2\varepsilon
\]
Thus for each $t\in T$, 
\[
0\leq v(t) -\pi(t) \cdot c(t) \leq 2\varepsilon 
\]
Then fix $t\in T$, and consider $s\not= t$. 
\begin{eqnarray*}
v(t) - \pi(t) \cdot c(s) &=& v(t) - v(s) -\pi(t) \cdot z(s) \\
&=& v(t) - v(s) -\pi(t) \cdot z_\varepsilon (s) + \varepsilon\\
&\leq & v(t) -v(s) -g(t,s,z_\varepsilon) + \varepsilon\\
&\leq & 2\varepsilon
\end{eqnarray*}
Thus for all $t\in T$, 
\[
v(t) - \pi(t) \cdot c(s) \leq 2\varepsilon \ \ \  \forall s\not= t
\]
So for all $t\in T$,
\[
0\leq \sup_{s\in T} v(t) - \pi(t) \cdot c(s) \leq 2\varepsilon
\]
The result follows. 
\end{proof}

Thus to show that virtual surplus extraction is possible, it suffices to show that $p^*\not> 0$. We establish this below by considering the dual of the optimization problem (vse), and making use of duality to argue that these problems have the same value. The heart of the proof is then to show that this common value cannot be positive under probabilistic independence. 

\begin{theorem}
Let $S$ be a compact metric space. If types satisfy probabilistic independence, then virtual extraction holds. 
\end{theorem}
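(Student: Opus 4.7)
The strategy is to compute the Lagrangian dual of the convex program (vse), exploit probabilistic independence to show the dual optimal value equals zero, and invoke strong duality to conclude $p^*=0$; by Lemma 2 this delivers virtual extraction. Introducing nonnegative multipliers $\mu\in\mathcal{M}_+(T)$ and $\nu\in\mathcal{M}_+(T\times T)$ dual to the two families of constraints, the Lagrangian is
\[
\mathcal{L}(c,z;\mu,\nu) = c + \int[\pi(t)\cdot z(t)-c]\,d\mu(t) + \int[v(t)-v(s)-\pi(t)\cdot z(s)-c]\,d\nu(t,s).
\]
Minimizing over the free scalar $c$ gives the normalization $\mu(T)+\nu(T\times T)=1$, and minimizing over $z\in C(T\times S)$ forces the linear identity
\[
\int \pi(t)\cdot z(t)\,d\mu(t) = \int \pi(t)\cdot z(s)\,d\nu(t,s)\qquad\text{for every }z\in C(T\times S),
\]
which is the statement that two Borel measures on $T\times S$ coincide. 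Disintegrating $\nu(dt,ds)=\nu^{(s)}(dt)\,\tilde\nu(ds)$ with $\tilde\nu$ the marginal of $\nu$ on the second coordinate, this identity is equivalent to $\mu=\tilde\nu$ together with the belief-consistency condition $\pi(s)=\int \pi(t)\,d\nu^{(s)}(t)$ for $\mu$-almost every $s$. The dual problem is then to maximize $\int[v(t)-v(s)]\,d\nu(t,s)$ over nonnegative $(\mu,\nu)$ satisfying these conditions.

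The role of probabilistic independence is now transparent: its hypothesis is exactly the belief-consistency condition above, so $\nu^{(s)}=\delta_s$ for $\mu$-almost every $s$, and every dual-feasible $\nu$ is concentrated on the diagonal $\{(s,s):s\in T\}$ with both marginals equal to $\mu$. The dual objective therefore collapses:
\[
\int[v(t)-v(s)]\,d\nu(t,s) = \int[v(s)-v(s)]\,d\mu(s) = 0,
\]
so $d^*=0$. Together with strong duality this gives $p^*=d^*=0\leq 0$, and Lemma 2 yields virtual extraction.

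The main obstacle is establishing strong duality, since weak duality alone gives only $p^*\geq d^*=0$, which is the wrong direction for invoking Lemma 2. My plan is to embed (vse) in the standard conic LP framework, with primal variables in $C(T\times S)\times{\bf R}$ and constraint slacks in a suitable product of continuous function spaces over the compact index sets $T$ and $T\times T$, whose topological duals are the corresponding spaces of signed Borel measures. The primal is strictly feasible (take $z\equiv 0$ and $c$ larger than $\max_{t,s}[v(t)-v(s)]$, finite by compactness of $T$ and continuity of $v$), and the relevant Lagrangian functionals are weak-$*$ continuous on bounded sets of measures thanks to continuity of $v$ and norm-continuity of $t\mapsto\pi(t)$. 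These properties should suffice to invoke a Sion-type minimax theorem on a bounded sublevel set of the primal, eliminating the duality gap. Once $p^*\leq 0$ is secured, the computation above completes the proof.
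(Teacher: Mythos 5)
Your proposal follows essentially the same duality route as the paper: form the Lagrangian dual of (vse), use the normalization from minimizing over $c$, characterize dual feasibility as an identity between two measures on $T\times S$, disintegrate, and let probabilistic independence force the dual variable onto the diagonal so that $d^*=0$; Lemma 2 then finishes. Two points of comparison are worth noting. First, you derive the dual feasibility identity directly from the observation that the Lagrangian is affine in $z$, so the infimum over $z$ is $-\infty$ unless the linear part vanishes identically; the paper instead routes this through Ekeland's Variational Principle and a subdifferential calculus for integral functionals (Lemmas 3--6 in the appendix), obtaining a sequence of subgradients tending to zero and then noting that, by linearity, the subgradient is a single constant measure which must therefore vanish. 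Your shortcut is legitimate and more economical here, though the paper's machinery is what generalizes beyond linear objectives (which is part of their stated motivation). Your arrangement is also direct rather than by contradiction, and as a bonus does not require dual attainment, only the absence of a duality gap. Second, the one step you leave as a plan --- strong duality --- is precisely where the paper invokes Slater's condition, using the same strictly feasible point $(z,c)=(0,\bar c)$ with $\bar c>\sup_{t,s}[v(t)-v(s)]$. Your proposed Sion-type minimax argument does go through: the normalized set $\{(\mu,\nu)\ge 0:\mu(T)+\nu(T\times T)=1\}$ is convex and weak-$*$ compact by Banach--Alaoglu, the Lagrangian is weak-$*$ continuous in $(\mu,\nu)$ for fixed $(c,z)$ (since $t\mapsto\pi(t)\cdot z(t)$ is continuous) and affine in each block, and $\sup_{(\mu,\nu)}\mathcal{L}$ recovers the primal value; but as written this step is a sketch, not a proof, so you should either carry out the minimax computation or cite the Banach-space Slater theorem as the paper does. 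With that step filled in, the argument is complete and correct.
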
 

\begin{proof}
By Lemma 3, to show that virtual surplus extraction is possible it suffices to show that $p^* \not> 0$. To that end, note that the Lagrange dual function for the problem $(\mbox{vse})$ is 
\[
{\cal L}(\lambda, \nu) = \inf_{{ c\in {\bf R} }\atop {z\in C(T\times S)}} \left\lbrace c + \lambda \cdot (f(t,z) -c) + \nu \cdot (v(t) - v(s) -g(t,s,z) - c) \right\rbrace
\]
where $\lambda \in {\cal M}(T)$ and $\nu\in {\cal M}(T\times T)$. 
Let $d\in C(T\times T)$ be given by 
\[
d(t,s) = v(t) - v(s)
\]
Note that $d(t,t) = 0$ for all $t\in T$. 

Define 
\[
h(\lambda, \nu ) = \inf_{z\in C(T\times S)} \left\{ \lambda \cdot f(z) + \nu\cdot (d-g(z))\right\}
\]
where for $z\in C(T\times S)$, $f(z):T\to {\bf R}$ denotes the function $f(z)(t) = f(t,z)$ for each $t\in T$ and $g(z):T\times T \to {\bf R}$ denotes the function $g(z)(t,s) = g(t,s,z)$ for each $t, s\in T$. 

Using this notation, we can rewrite the Lagrange dual function for $(\mbox{vse})$ as follows:
\[
{\cal L}(\lambda, \nu) = \left\lbrace 
\begin{array}{lr}
h(\lambda, \nu) & \text{ if } \int \lambda (dt) + \iint \nu(ds\ dt) = 1\\
-\infty & \text{ otherwise }
\end{array}
\right.
\]
Thus the dual problem of $(\mbox{vse})$ is 
\begin{equation}
\begin{aligned}
d^* := & \underset{\lambda \in {\cal M}(T), \; \nu\in {\cal M}(T\times T)}{\text{sup}}
& & h(\lambda, \nu) \\
& \text{subject to}
& & (\lambda, \nu) \geq 0 \\
&&& \int \lambda (dt) + \iint \nu(ds\ dt) = 1 
\end{aligned}
\tag{d-vse}
\end{equation}

Then note that Slater's condition holds for the original problem $(\mbox{vse})$. To see this, set $z=0$, so 
\[
f(t,z) = g(t,s,z) = 0 \ \ \ \forall t,s \in T
\]
Then choose 
\[
\bar c > \sup_{t,s\in T} v(t) - v(s) \geq 0
\]
For $(z,c) = (0, \bar c)$, 
\[
\sup_{t\in T} f(t,z) - c = -\bar c <0
\]
and 
\[
\sup_{t,s\in T} v(t) - v(s) - g(t,s,z) -c = \sup_{t,s\in T} v(t) - v(s) -\bar c <0
\]
Thus $p^*=d^*$ and in addition $d^*$ is obtained, where $p^*$ is the optimal value of $(\mbox{vse})$ and $d^*$ is the optimal value of $(\mbox{d-vse})$. 

Now it suffices to show that $p^* = d^* \not> 0$. To show this, suppose by way of contradiction that $p^* = d^* >0$. Since $d^*$ is obtained in $(\mbox{d-vse})$, there exists $(\lambda, \nu) \geq 0$ such that 
\[
d^* = h(\lambda, \nu) >0 \mbox{ and } \int \lambda(dt) + \iint \nu(ds\ dt) =1
\]
Recall that, by definition, 
\[
h(\lambda, \nu) = \inf_{z\in C(T\times S)} \left( \lambda \cdot f(z) + \nu \cdot (d-g(z) ) \right)
\]
and $f(0) = g(0) =0$, which implies
\[
h(\lambda , \nu) \leq \nu \cdot d
\]
Since $h(\lambda, \nu) >0$, this implies $\nu\cdot d >0$. Thus $\nu\not= 0$. Since $\lambda, \nu \geq 0$, this implies $\nu>0$. 

Let $F:C(T\times S) \to {\bf R}$ be given by
\[
F(z) = \lambda \cdot f(z) + \nu\cdot (d-g(z))
\]
Note that $F$ is convex and continuous, and by definition, 
\[
h(\lambda, \nu) = \inf_{z\in C(T\times S)} F(z) >0
\]
In particular, this implies $\inf_{z\in C(T\times S)} F(z) \in {\bf R}$. By Ekeland's Variational Principle (see Lemma 4 in the Appendix), there exists a sequence $\{ z_n\}$ and a sequence $\{ \gamma_n\}$ with $\gamma_n \in \partial F(z_n)$ for each $n$ such that
\[
F(z_n) = \lambda\cdot f(z_n) + \nu\cdot (d-g(z_n) ) \to \inf_{z\in C(T\times S)} F(z)= h(\lambda, \nu) >0
\]
and
\[
\Vert \gamma_n \Vert \to 0
\]
By Lemma 8 (see the Appendix), since $\gamma_n \in \partial F(z_n)$ for each $n$,  $\gamma_n$ is the measure for which 
\[
\gamma_n\cdot y = \int \pi(t) \cdot y(t) \lambda (dt) - \iint \pi(s) \cdot y(t) \nu (ds\ dt)
\]
for any measurable function $y$. But then note that $\gamma_n$ is constant for each $n$; let this constant measure be denoted $\gamma$. Since $\Vert \gamma_n \Vert \to 0$, this implies $\Vert \gamma \Vert =0$, and that $\gamma \cdot y = 0$ for any such $y$. 

Now fix $A\subseteq T$ and let $y$ be given by
\[
y(t) = \left\lbrace 
\begin{array}{lr}
0 & \text{ if } t\not\in A\\
{\bf 1}(S) & \text{ if } t\in A
\end{array}
\right.
\]
where ${\bf 1}(S)$ is the indicator of $S$. Then 
\begin{eqnarray*}
\gamma\cdot y &=& \int_A \pi(t) \cdot y(t) \lambda (dt) - \iint_{T\times A} \pi(s) \cdot y(t) \nu(ds\ dt)\\
&=& \int_A \lambda(dt) - \iint_{T\times A} \nu(ds\ dt)\\
&=& \lambda(A) - \nu(T\times A)
\end{eqnarray*}
And $\gamma \cdot y = 0$, which implies $\lambda(A) - \nu(T\times A)= 0$, that is, $\lambda(A) = \nu(T\times A)$. Since $A$ was arbitrary, 
$\lambda(A) = \nu(T\times A)$ for each $A\subseteq T$. From this it follows first that $\lambda(T) = \nu(T\times T)$, and since $\nu>0$, this implies $\lambda(T) = \nu(T\times T) >0$. Then without loss of generality, rescaling if necessary, take $\lambda(T) = \nu(T\times T) = 1$. Second, this implies that, using disintegration of measures, we can write
\[
\nu = \int \nu_t(ds) \lambda(dt)
\]
where $\nu_t$ is a measure on $T$, $\nu_t \geq 0$ and $\nu_t(T) = 1$ for each $t$ in the support of $\lambda$. 

For  each  $t\in T$, let 
\[
\gamma(t) = \pi(t) - \int \pi(s) \nu_t(ds)
\]
Then $\gamma$ is the measure given by
\[
\gamma(E) = \int \gamma(t) (E_t) \lambda(dt) \ \ \ \ \ \forall E\subseteq T\times S
\]
where for $E\subseteq T\times S$, $E_t:= \{ r\in S: (t,r) \in E\}$. 

Then note that 
\begin{eqnarray*}
\Vert \gamma \Vert &=& \sup_E \Vert \gamma(E) \Vert \ \ \ \ \mbox{ by definition}\\
&=& \gamma^+(T\times S) + \gamma^-(T\times S) \ \ \ \ \mbox{ by definition }\\
&=& \int \left\lbrack \gamma^+(t) (S) + \gamma^-(t) (S) \right\rbrack \lambda(dt)\\
&=& \int \Vert \gamma(t)\Vert \lambda (dt) 
\end{eqnarray*}
Recall from above 
\[
\Vert \gamma \Vert =  \int \Vert \gamma(t)\Vert \lambda (dt) =  0
\]
By definition, $\Vert \gamma(t)\Vert \geq 0$ for each $t\in T$, hence $\Vert \gamma(t)\Vert = 0$ for $\lambda-\mbox{a.e } t\in T$. 

Thus for $\lambda-\mbox{a.e } t\in T$, 
\[
\gamma(t)  =   \pi(t) - \int \pi(s) \nu_t(ds) =0
\]
where $\nu_t \in \Delta(T)$. Thus by probabilistic independence, $\nu_t = \delta_t$ for $\lambda-\mbox{a.e } t\in T$. 

But then
\begin{eqnarray*}
\nu\cdot d &=& \iint d(s,t) \nu(ds\ dt) \\
&=& \iint d(s,t) \nu_t(ds) \lambda(dt)\\
&=& \int_{\mbox{supp } \lambda } d(t,t) \lambda (dt)\\
&=& 0  \ \ \ \ \mbox{ since } d(t,t) = 0 \mbox{ for all } t\in T
\end{eqnarray*}
This is a contradiction, as $\nu\cdot d >0$. Thus $p^* \leq 0$. 
\end{proof}
\bigskip
\bigskip

\section{Discussion}

The techniques developed in sections 4 and 5 provide new insights into the foundational result of McAfee and Reny (1992), and the surplus extraction problem more generally. Surplus extraction theorems are central results in mechanism design. Their conclusions that designers can extract all, or virtually all, information rents under standard assumptions yield strong and important predictions. These predictions are arguably implausible in a variety of practical settings, motivating significant work reconsidering many standard assumptions in mechanism design. New methods that help to understand what drives these results and their limitations are useful as a consequence. The constructive methods developed in section 4 highlight the connection between surplus extraction and the geometry of the set of beliefs, while the methods developed in section 5 instead highlight the connection between surplus extraction and duality by using the natural characterization of full surplus extraction contracts as solutions to families of inequalities.  Both techniques are useful for understanding when surplus extraction holds, when it might fail, and limits on designers in settings where virtual or full extraction fails. Both techniques also suggest methods for studying surplus extraction in models beyond the standard case.

To indicate how the methods we developed can be used to study surplus extraction in other settings, we close by considering a modification of the standard model in which each type is associated with a set of beliefs rather than a single belief. This set could stem from the designer's objective to have mechanisms that are robust to misspecifying agents' beliefs, or from agents' perceptions of ambiguity. We use the constructive methods from section 4 in a simple example in this framework, building on Example 1, to illustrate.

\begin{figure}
\centering
\begin{tikzpicture}[scale = .6]

\fill (5,2) circle (1pt);

\fill (5,0) circle (1pt);

\draw   (5,2) .. controls (5,5) and (1,5) .. (1,1);

\draw  (1,1) .. controls (1,-3) and (5,-3) .. (5,0);

\fill (5,2) circle (2pt);

\fill (5,0) circle (2pt);

\draw (4,1) -- (5,2);

\draw (4,-1) -- (5,0);

\coordinate [label=right: $\pi_2(0)$] (a) at (5,2);

\coordinate [label=right: $\pi_2(1)$] (b) at (5,0);

\coordinate [label=left: $\pi_1(0)$] (a) at (4,1);

\coordinate [label=left: $\pi_1(1)$] (b) at (4,-1);

\fill (4,1) circle (1pt);

\fill (4,-1) circle (1pt);

\draw  (4,1) .. controls (4,4) and (0,4) .. (0,0);

\draw  (0,0) .. controls (0,-4) and (4,-4) .. (4,-1);

\fill (4,1) circle (2pt);

\fill (4,-1) circle (2pt);

\draw (3, 2.95) -- (4, 3.95);

\fill (3,2.95) circle (2pt);

\fill (4,3.95) circle (2pt);

\coordinate [label=above: $\pi_2(t)$] (b) at (4,3.95);

\coordinate [label=below left: $\pi_1(t)$] (a) at (3,2.95);

\end{tikzpicture}

\caption{Graph of $\{ \Pi(t): t\in T\}$}
\end{figure}

\noindent {\bf Example 2: } Suppose $S$ is finite. Let $\pi_1, \pi_2:T\to \Delta(S)$ where $\pi_1$ is as in Example 1, and $\pi_2$ is a translation of $\pi_1$ as depicted in Figure 8.
For each $t\in T$, set $\Pi(t) = \{ a \pi_1(t) + (1-a) \pi_2(t) : a\in [0,1]\}$. See Figure 8. Then $\Pi(t) \subseteq \Delta(S)$ is compact and convex for each $t\in T$, and $\Pi:T\to 2^{\Delta(S)}$ is a continuous correspondence.  

Given $v:T\to {\bf R}$, for each $t\in T$ define $V_t:{\bf R}^S \to {\bf R}$ by 
\[
V_t(x) = \min_{\pi \in \Pi(t)} \pi \cdot (v(t) - x) = v(t) - \max_{\pi \in \Pi(t)} \pi\cdot x
\]
For each $t\in T$, $V_t(x)$ gives the minimum expected surplus for type $t$ from the contract $x\in {\bf R}^S$, computed with respect to beliefs in $\Pi(t)$. This is a version of maxmin expected utility, as in Gilboa and Schmeidler (1989). 

We consider the designer's surplus extraction problem, given these utilities for agents. As we noted above, this can be motivated either by robustness concerns of the designer, or the perception of ambiguity by agents. {\it Virtual extraction} holds here if for each $v:T\to {\bf R}$ and for each $\varepsilon >0$, there exists a menu $\{ c(t) \in {\bf R}^S : t\in T\}$ such that for each type $t\in T$,
\[
0\leq V_t(c(t)) \leq \varepsilon \ \ \mbox{ and } \ \ V_t(c(s)) \leq \varepsilon \ \ \forall s\not= t
\]
We adapt the constructive argument in section 4 to show that virtual extraction holds in this setting.\footnote{As in the standard case, it is straightforward to show that virtual extraction can also be achieved with a finite menu; we omit this extension of Theorem 2.}

First, let $t \in T\setminus  \{0,1\}$. In this case, there exists $z(t) \in {\bf R}^S$ such that 
\[
\pi \cdot z(t) = 0 \ \ \ \forall \pi \in \Pi(t) \ \mbox{ and } \ \pi \cdot z(t) >0 \ \ \ \forall \pi\in \Pi(s), \ \forall s\not= t
\]
See Figure 9. As in Example 1, we construct a contract of the form $c(t) = v(t) + \alpha(t) z(t)$ for appropriate choice of $\alpha(t)$. 

Let $\varepsilon >0$ be given. Choose $\delta >0$ such that $\Vert s-t\Vert < \delta \Rightarrow v(s) - v(t) < \varepsilon$. Then choose $\alpha(t) >0$ sufficiently large so that
\[
\alpha (t) > \max_{\substack{\Vert s-t\Vert \geq \delta \\ \pi\in \Pi(s)} } \frac{v(s) - v(t)}{\pi \cdot z(t)}
\]
For the resulting contract $c(t) = v(t) + \alpha(t) z(t) $, 
\[
V_t(c(t)) = v(t) - \max_{\pi\in \Pi(t)} \pi \cdot c(t) = v(t) - v(t) - \alpha(t) \max_{\pi\in \Pi(t)} \pi\cdot z(t) =0
\] 
and for all $s\not= t$, 
\[
V_s(c(t)) = v(s) - \max_{\pi\in \Pi(s)} \pi \cdot c(t) = v(s) - v(t) - \alpha(t) \max_{\pi\in \Pi(s)}\pi\cdot z(t) \leq \varepsilon 
\]

\begin{figure}
\centering
\begin{tikzpicture}[scale = .6]

\draw [gray!20, fill=gray!20] (1,3.95) -- (4.5,2.25) -- (5.5,3.25) -- (2,4.95) ;

\fill (5,2) circle (1pt);

\fill (5,0) circle (1pt);

\draw   (5,2) .. controls (5,5) and (1,5) .. (1,1);

\draw  (1,1) .. controls (1,-3) and (5,-3) .. (5,0);

\fill (5,2) circle (2pt);

\fill (5,0) circle (2pt);

\draw (4,1) -- (5,2);

\draw (4,-1) -- (5,0);

\coordinate [label=right: $\pi_2(0)$] (a) at (5,2);

\coordinate [label=right: $\pi_2(1)$] (b) at (5,0);

\coordinate [label=left: $\pi_1(0)$] (a) at (4,1);

\coordinate [label=left: $\pi_1(1)$] (b) at (4,-1);

\coordinate [label = left: $z(t)$] (c) at (1.25,3.25);

\fill (4,1) circle (1pt);

\fill (4,-1) circle (1pt);

\draw  (4,1) .. controls (4,4) and (0,4) .. (0,0);

\draw  (0,0) .. controls (0,-4) and (4,-4) .. (4,-1);

\fill (4,1) circle (2pt);

\fill (4,-1) circle (2pt);

\draw (3, 2.95) -- (4, 3.95);

\fill (3,2.95) circle (2pt);

\fill (4,3.95) circle (2pt);

\draw (2,4.95) -- (5.5, 3.25);

\draw (1,3.95) -- (4.5, 2.25);

\draw [->] (1.5, 3.75) -- (1.25, 3.25);

\coordinate [label=above: $\pi_2(t)$] (b) at (4,3.95);

\coordinate [label=below left: $\pi_1(t)$] (a) at (3,2.95);

\end{tikzpicture}

\caption{Construction of $c(t)$, $t\in T\setminus \{ 0,1\}$.}
\end{figure}

Now consider $t\in \{ 0,1\}$, and without loss of generality take $t=0$. First note that, as depicted in the left side of Figure 10, there exists $z(0) \in {\bf R}^S$ such that
\[
\pi \cdot z(0) =  0 \ \ \ \forall \pi\in \Pi(t), \ t= 0,1  \ \mbox{ and } \ \pi \cdot z(0) >0 \ \ \ \forall \pi\in \Pi(s), \ \forall s\not\in\{ 0,1\}
\]
Then note that there exists $z_1(0) \in {\bf R}^S$ such that 
\[
\pi \cdot z_1(0) = 0  \ \ \ \forall \pi\in \Pi(0) \ \mbox{ and } \ \pi \cdot z_1(0) > 0 \ \ \ \forall \pi \in \Pi(1)
\]
See the right side of Figure 10. Choose $\alpha_1(0) > 0$ sufficiently large so that 
\[
\alpha_1(0) > \max_{\pi\in \Pi(1)} \frac{v(1) - v(0) } { \pi \cdot z_1(0)}
\]
Then set $c_1(0) = v(0) + \alpha_1(0) z_1(0)$. By construction, 
\[
V_0(c_1(0)) = v(0) - \max_{\pi\in \Pi(0)} \pi \cdot c_1(0) = v(0) -v(0) -\alpha_1(0) \max_{\pi \in \Pi(0)} \pi \cdot z_1(0) = 0
\]
and
\[
V_1(c_1(0)) = v(1) -\max_{\pi\in \Pi(1)} \pi \cdot c_1(0) = v(1) -v(0) -\alpha_1(0) \max_{\pi\in \Pi(1)} \pi \cdot z_1(0) <0
\]
The contract $c_1(0)$ might leave surplus for other types $t\not\in \{0,1\}$, however. As in Example 1, we use the additional stochastic payment $z(0)$ to account for this. 

\begin{figure}
\centering
\begin{tikzpicture}[scale = 0.6]

\fill (5,2) circle (1pt);

\fill (5,0) circle (1pt);

\draw [gray!20, fill=gray!20] (4,2.5) -- (4,-3.5) -- (5,-2.5) -- (5,3.5) ;

\draw   (5,2) .. controls (5,5) and (1,5) .. (1,1);

\draw  (1,1) .. controls (1,-3) and (5,-3) .. (5,0);

\fill (5,2) circle (2pt);

\fill (5,0) circle (2pt);

\draw (4,1) -- (5,2);

\draw (4,-1) -- (5,0);

\coordinate [label=right: $\pi_2(0)$] (a) at (5,2);

\coordinate [label=right: $\pi_2(1)$] (b) at (5,0);

\coordinate [label=left: $\pi_1(0)$] (a) at (4,1);

\coordinate [label=left: $\pi_1(1)$] (b) at (4,-1);

\fill (4,1) circle (1pt);

\fill (4,-1) circle (1pt);

\draw  (4,1) .. controls (4,4) and (0,4) .. (0,0);

\draw  (0,0) .. controls (0,-4) and (4,-4) .. (4,-1);

\draw  (4,2.5) -- (4,-3.5);

\draw  (5,3.5) -- (5,-2.5);

\fill (4,1) circle (2pt);

\fill (4,-1) circle (2pt);

\draw (3, 2.95) -- (4, 3.95);

\fill (3,2.95) circle (2pt);

\fill (4,3.95) circle (2pt);

\coordinate [label=above: $\pi_2(t)$] (b) at (4,3.95);

\coordinate [label=below left: $\pi_1(t)$] (a) at (3,2.95);

\coordinate [label=below: $z(0)$] (c) at (3.25,-3);

\draw [->] (4,-3) -- (3.5,-3);

\coordinate [label=left: $\pi_1(0)$] (a) at (13,1);

\coordinate [label=left: $\pi_1(1)$] (b) at (13,-1);

\coordinate [label = left: $z_1(0)$] (c) at (11.75,2.25);

\draw [gray!20, fill=gray!20] (12,3) -- (13,1) -- (13.50,0) -- (14.50,1) -- (14,2) -- (13,4);

\draw [->] (12.25, 2.50) -- (11.75, 2.25);

\draw (13,1) -- (13,-1);

\draw (14,2) -- (14,0);

\draw (12,3) -- (13, 1) -- (13.50,0);

\draw (13,4) -- (14,2) -- (14.50,1);

\fill (13,1) circle (2pt);

\fill (13,-1) circle (2pt);

\fill (14,2) circle (2pt);

\fill (14,0) circle (2pt);

\draw (13,1) -- (14,2);

\draw (13,-1) -- (14,0);

\coordinate [label=right: $\pi_2(0)$] (a) at (14,2);

\coordinate [label=right: $\pi_2(1)$] (b) at (14,0);

\end{tikzpicture}

\caption{Construction of $c_1(0)$ and $c(0)$. }
\end{figure}

To that end, for each $t\in T$ set 
\[
\overline{V}_t(c_1(0)) = v(t) - v(0) - \min_{\pi\in \Pi(t)} \pi\cdot (\alpha_1(0)z_1(0))
\]
For type $t$, $\overline{V}_t(c_1(0))$ is the maximum expected surplus from the contract $c_1(0)$ over all beliefs in $\Pi(t)$. 
Note that by construction, $\overline{V}_0(c_1(0)) = 0$ and $\overline{V}_1(c_1(0))<0$. Then choose $\delta >0$ such that for $\Vert s-0\Vert < \delta$ or $\Vert s - 1\Vert <\delta$, $\overline{V}_s(c_1(0))  <\varepsilon$. This is possible, by the continuity of $v$ and $\Pi$, and the fact that  $\overline{V}_0(c_1(0)) = 0$ and $\overline{V}_1(c_1(0))<0$.  

Now set $\alpha(0) >0$ such that
\[
\alpha(0) > \max_{\substack{\Vert s-0\Vert \geq \delta \\ \Vert s-1\Vert \geq \delta \\ \pi\in \Pi(s)}} \frac{ \overline{V}_s(c_1(0))}{\pi \cdot z(0)}
\]
and set 
\[
c(0)= c_1(0) + \alpha(0) z(0) = v(0) + \alpha_1(0) z_1(0) + \alpha(0) z(0)
\]
By construction, for types $t \in \{0,1\}$, $\pi \cdot c(0) = \pi \cdot c_1(0)$  for all $\pi \in \Pi(t)$. Thus 
$V_0(c(0))= 0$ and $V_1(c(0)) <0$. 
For $t\not\in \{ 0,1\}$, again by construction, 
\[
V_t(c(0)) = v(t) - v(0) - \max_{\pi\in \Pi(t)} \pi \cdot (\alpha_1(0) z_1(0) +\alpha(0)  z(0) ) < \varepsilon
\]
The collection $\{ c(t) : t\in T\}$ thus constructed achieves extraction of all but at most $\varepsilon$ surplus. \hfill$\diamondsuit$

\section{Appendix}

Because these results might be of independent interest, we include here the derivation of the version of Ekeland's Variational Principle that we used in the proof of Theorem 7.

Before giving the main result of the appendix, we start with some preliminary notation, definitions, and results, including the classic version of Ekeland's Variational Principle, and an extension due to Borwein, from which the main result follows quickly. 

For a set $X$ and an extended real-valued function $f:X\to {\bf R} \cup \{ +\infty \}$, the \emph{effective domain} of $f$, denoted $\mbox{ dom } f$, is the set of points $x \in X$ such that $f(x) \in {\bf R}$. An extended real-valued function $f$ is \emph{proper} if $\mbox{ dom } f \not= \emptyset$. Let $\inf f = \inf_{x\in X} f(x)$ below. If $X$ is a topological vector space, $X^*$ denotes its dual, and for $x\in X$ and $x^*\in X^*$, $\langle x^*, x\rangle = x^*(x)$. 

\begin{definition}
Let $X$ be a topological vector space and $f:X\to {\bf R} \cup \{ +\infty\}$. For $x\in \mbox{ dom } f$ and $\varepsilon \geq 0$, the \emph{$\varepsilon$-subdifferential} of $f$ at $x$, denoted $\partial_\varepsilon f(x)$, is 
\[
\partial_\varepsilon f(x) = \{ x^*\in X^*: f(y) \geq f(x) + \langle x^*, y-x\rangle -\varepsilon \ \ \forall y\in X\}
\]
\end{definition}

\noindent {\bf Note: } For $\varepsilon = 0$, $\partial_0 f(x) = \partial f(x)$, the standard subdifferential. For any $x\in \mbox{ dom } f$ and any $\varepsilon >0$, $\inf f \leq f(x) \leq \inf f + \varepsilon \iff 0\in \partial_\varepsilon f(x)$. 

Next we state the classic version of Ekeland's Variational Principle (Ekeland, 1974). 

\begin{theorem}
{\rm \bf (Ekeland's Variational Principle)} Let $(V,d)$ be a complete metric space and $F:V\to {\bf R}\cup \{ +\infty\}$ be a proper, lower semicontinuous function such that $\inf F > -\infty$. Let $\varepsilon >0$ and $\beta >0$. For every $u\in V$ such that 
\[
\inf F \leq F(u) \leq \inf F + \varepsilon
\]
there exists $v\in V$ such that 
\begin{itemize}
  \item[(i)] $F(v) \leq F(u)$
  \item[(ii)] $d(u,v) \leq \beta$
  \item[(iii)] $F(u) \geq F(v) - \frac{\varepsilon}{\beta} d(v,w) \ \ \ \forall w\not= v$   
\end{itemize}
If in addition $F$ is convex, then 
\begin{itemize}
  \item[(iv)] $v$ can be chosen such that there exists $g\in \partial F(v)$ such that $\Vert g \Vert \leq \frac{\varepsilon}{\beta}$  
\end{itemize}
\end{theorem}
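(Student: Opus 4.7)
The plan is to follow Ekeland's original argument, which constructs the point $v$ as a minimal element with respect to a suitably chosen partial order. Define a relation $\preceq$ on $V$ by $w\preceq y \iff F(w) + \tfrac{\varepsilon}{\beta}\,d(w,y) \leq F(y)$. This is reflexive, antisymmetric (adding $w\preceq y$ and $y\preceq w$ yields $\tfrac{2\varepsilon}{\beta}\,d(w,y) \leq 0$), and transitive by the triangle inequality. The intuition is that $w\preceq y$ means ``$w$ is enough better than $y$ to justify the distance from $y$''; the desired $v$ is a $\preceq$-minimal element dominated by $u$.

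To construct $v$, set $u_0 := u$ and recursively pick $u_{n+1}\in S_n := \{w : w\preceq u_n\}$ with $F(u_{n+1}) \leq \inf_{S_n} F + 2^{-n}$. Since $u_m \preceq u_n$ for $m>n$ gives $\tfrac{\varepsilon}{\beta}\,d(u_n,u_m) \leq F(u_n) - F(u_m)$ and $\{F(u_n)\}$ is nonincreasing and bounded below, $\{u_n\}$ is Cauchy. Completeness yields a limit $v$, and lower semicontinuity of $F$ together with continuity of $d$ shows $v \preceq u_n$ for every $n$, hence $v \preceq u$. To see that $v$ is $\preceq$-minimal, note that any $w\preceq v$ also lies in every $S_n$, so $F(u_{n+1}) \leq F(w) + 2^{-n}$; passing to the limit via lsc gives $F(v) \leq F(w)$, and combining with $w\preceq v$ forces $w=v$. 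This minimality is precisely property (iii), read in its standard form $F(w) + \tfrac{\varepsilon}{\beta}\,d(v,w) > F(v)$ for $w\neq v$. Property (i) is the relation $v\preceq u$ itself, and (ii) follows because $d(u,v) \leq \tfrac{\beta}{\varepsilon}\bigl(F(u)-F(v)\bigr) \leq \tfrac{\beta}{\varepsilon}\cdot\varepsilon = \beta$.

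For the convex case (iv), observe that (iii) says exactly that $v$ is a global minimizer of the convex continuous function $G(w) := F(w) + \tfrac{\varepsilon}{\beta}\,d(v,w)$, so $0 \in \partial G(v)$. Since $d(v,\cdot)$ is continuous everywhere on $V$, the Moreau--Rockafellar sum rule yields $\partial G(v) = \partial F(v) + \tfrac{\varepsilon}{\beta}\,\partial d(v,\cdot)(v)$, and the subdifferential of the distance function at its base point is the closed unit ball of $X^*$. Hence there exists $g\in \partial F(v)$ with $\|g\| \leq \tfrac{\varepsilon}{\beta}$, as claimed.

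The main obstacle is the minimal-element construction: one must choose the tolerances tight enough that the sequence $\{u_n\}$ converges and loose enough that they remain achievable in $S_n$, and then use lsc carefully at the limit to verify both $v\preceq u_n$ and the minimality property $F(v)\leq F(w)$ for $w\preceq v$. A secondary but essential subtlety in (iv) is the sum-rule invocation; it is routine here because $d(v,\cdot)$ is finite and continuous throughout $V$, but this is the only step in the argument that genuinely uses convexity.
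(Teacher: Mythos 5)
Your proof is correct, and it is the standard argument from Ekeland (1974): the partial order $w\preceq y \iff F(w)+\tfrac{\varepsilon}{\beta}d(w,y)\leq F(y)$, the recursive near-minimizing selection in the sets $S_n$, completeness plus lower semicontinuity to pass to the limit, and minimality of $v$ yielding (iii). The paper itself does not prove this theorem --- it states it as a classical result with a citation to Ekeland (1974) --- so there is nothing to compare against except the literature, and your argument matches it. Two small remarks. First, the paper's statement of (iii) reads $F(u)\geq F(v)-\tfrac{\varepsilon}{\beta}d(v,w)$, which is evidently a typo for $F(w)\geq F(v)-\tfrac{\varepsilon}{\beta}d(v,w)$; you correctly prove the intended (strict) version $F(w)+\tfrac{\varepsilon}{\beta}d(v,w)>F(v)$ for $w\neq v$. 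Second, part (iv) tacitly assumes $V$ is a Banach space rather than a bare complete metric space (otherwise $\partial F$ and $\Vert g\Vert$ are undefined); granting that, your use of the Moreau--Rockafellar sum rule together with the fact that the subdifferential of $\Vert\cdot-v\Vert$ at $v$ is the closed unit ball of the dual is exactly the right way to extract $g\in\partial F(v)$ with $\Vert g\Vert\leq \tfrac{\varepsilon}{\beta}$, and the continuity of the norm term is precisely what licenses the sum rule.
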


More precise approximations can be given for convex functions, as shown by Borwein (1982).  

\begin{theorem}
{\rm \bf (Borwein, 1982, Theorem 1)} Let $X$ be a Banach space and  $f:X\to {\bf R} \cup \{ +\infty\}$ be a proper, convex, lower semicontinuous function. Let $\varepsilon >0$ and $k \geq 0$ be given. Let 
\[ 
x_0^*\in \partial_\varepsilon f(x_0)
\]
Then there exist $x_\varepsilon$ and $x_\varepsilon^*$ such that
\[
x_\varepsilon^* \in \partial f(x_\varepsilon)
\]
and such that
\begin{eqnarray*}
\Vert x_\varepsilon - x_0\Vert &\leq & \sqrt{\varepsilon}\\
\Vert f(x_\varepsilon) - f(x_0) \Vert &\leq& \sqrt{\varepsilon} (\sqrt{\varepsilon} + \frac{1}{k})\\
\Vert x_\varepsilon^* - x_0^* \Vert &\leq & \sqrt{\varepsilon} (1+ k\Vert x_0^*\Vert )\\
\Vert x_\varepsilon^*(h) - x_0^*(h)\Vert &\leq & \sqrt{\varepsilon} ( \Vert h \Vert + k\Vert x_0^*(h)\Vert)\\
x_\varepsilon^* &\in& \partial_{2\varepsilon}f(x_0)
\end{eqnarray*}
\end{theorem}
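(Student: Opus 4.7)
The plan is to derive this from Ekeland's Variational Principle (Theorem 7, item (iv)) applied to the convex shift $g(x) := f(x) - x_0^*(x)$, with $X$ reequipped with an equivalent norm that incorporates $x_0^*$. The condition $x_0^* \in \partial_\varepsilon f(x_0)$ is precisely the statement that $g(x_0) \leq \inf g + \varepsilon$, so $x_0$ is an $\varepsilon$-minimizer of the proper convex lsc function $g$. The subdifferential identity $\partial g(x) = \partial f(x) - x_0^*$ then translates perturbed subgradients of $g$ back into perturbed subgradients of $f$.

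First, I would introduce the equivalent norm $\nu(x) := \|x\| + k|x_0^*(x)|$, for which $\|x\| \leq \nu(x) \leq (1 + k\|x_0^*\|)\|x\|$, so $(X,\nu)$ is still a Banach space. Second, apply Ekeland's Variational Principle in this norm to $g$ with $u = x_0$ and $\beta = \sqrt{\varepsilon}$. This produces a point $x_\varepsilon$ satisfying $\nu(x_\varepsilon - x_0) \leq \sqrt{\varepsilon}$ and $g(x_\varepsilon) \leq g(x_0)$, together with an element $u^* \in \partial g(x_\varepsilon)$ satisfying $\nu^*(u^*) \leq \sqrt{\varepsilon}$ in the associated dual norm. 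Setting $x_\varepsilon^* := u^* + x_0^*$ then gives $x_\varepsilon^* \in \partial f(x_\varepsilon)$, as required.

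Third, all five stated estimates follow by unpacking $\nu$ and $\nu^*$. The displacement bound $\|x_\varepsilon - x_0\| \leq \nu(x_\varepsilon - x_0) \leq \sqrt{\varepsilon}$ is immediate, and from $\nu(x_\varepsilon - x_0) \leq \sqrt{\varepsilon}$ one also extracts $|x_0^*(x_\varepsilon - x_0)| \leq \sqrt{\varepsilon}/k$. Combining $g(x_\varepsilon) \leq g(x_0)$ (which gives $f(x_\varepsilon) - f(x_0) \leq x_0^*(x_\varepsilon - x_0)$) with the $\varepsilon$-subgradient inequality (which gives $f(x_\varepsilon) - f(x_0) \geq x_0^*(x_\varepsilon - x_0) - \varepsilon$) yields
\[
|f(x_\varepsilon) - f(x_0)| \leq |x_0^*(x_\varepsilon - x_0)| + \varepsilon \leq \sqrt{\varepsilon}\bigl(\sqrt{\varepsilon} + 1/k\bigr).
\]
For any $h \in X$,
\[
\bigl|(x_\varepsilon^* - x_0^*)(h)\bigr| = |u^*(h)| \leq \nu^*(u^*)\,\nu(h) \leq \sqrt{\varepsilon}\bigl(\|h\| + k|x_0^*(h)|\bigr),
\]
which yields the pointwise bound and, by taking the supremum over $\|h\| \leq 1$, the norm bound $\|x_\varepsilon^* - x_0^*\| \leq \sqrt{\varepsilon}(1 + k\|x_0^*\|)$.

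Finally, to verify $x_\varepsilon^* \in \partial_{2\varepsilon} f(x_0)$, use $x_\varepsilon^* \in \partial f(x_\varepsilon)$ to write, for any $y\in X$,
\[
f(y) - f(x_0) - x_\varepsilon^*(y - x_0) \geq \bigl[f(x_\varepsilon) - f(x_0)\bigr] - x_\varepsilon^*(x_\varepsilon - x_0).
\]
Substituting $f(x_\varepsilon) - f(x_0) \geq x_0^*(x_\varepsilon - x_0) - \varepsilon$ and $x_\varepsilon^* - x_0^* = u^*$, the right-hand side equals $-\varepsilon - u^*(x_\varepsilon - x_0) \geq -\varepsilon - \nu^*(u^*)\,\nu(x_\varepsilon - x_0) \geq -2\varepsilon$. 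The main obstacle is guessing the correct renorming: once $\nu(x) = \|x\| + k|x_0^*(x)|$ is in hand, all five bounds drop out in tandem from a single invocation of Ekeland, but the improvement with the $1/k$ term in the value estimate is only possible because $\nu$ has $k|x_0^*(x)|$ built in, so that Ekeland's displacement control bounds $|x_0^*(x_\varepsilon - x_0)|$ directly rather than only through $\|x_0^*\|\cdot\|x_\varepsilon - x_0\|$.
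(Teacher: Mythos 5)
Your proof is correct. Note, however, that the paper offers no proof of this statement at all: it is imported verbatim from Borwein (1982, Theorem 1), so there is no ``paper's argument'' to compare against. What you have written is essentially a faithful reconstruction of Borwein's own proof: translate the $\varepsilon$-subgradient condition into the statement that $x_0$ is an $\varepsilon$-minimizer of $g = f - x_0^*$, renorm $X$ by $\nu(x) = \Vert x\Vert + k\vert x_0^*(x)\vert$, and apply Ekeland's principle with $\beta = \sqrt{\varepsilon}$ so that the displacement bound, the dual-norm bound $\nu^*(u^*)\leq\sqrt{\varepsilon}$, and the descent property $g(x_\varepsilon)\leq g(x_0)$ jointly deliver all five estimates. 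Two small points worth making explicit if you write this up: (i) you are using that the point $v$ produced in part (iv) of the stated Ekeland theorem still satisfies parts (i) and (ii) --- this is standard (the subgradient in (iv) is extracted from the inequality in (iii) at the same point $v$ via the sum rule for convex subdifferentials), but the statement as given in the paper leaves it implicit; (ii) the subdifferential $\partial g(x_\varepsilon)$ is a set of linear functionals and hence independent of the choice of equivalent norm, so only the \emph{bound} on $u^*$ lives in the $\nu$-dual norm, which is exactly what your unpacking of $\nu^*(u^*)\nu(h)\leq\sqrt{\varepsilon}(\Vert h\Vert + k\vert x_0^*(h)\vert)$ requires. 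The $k=0$ case is vacuous for the value estimate and reduces to the classical bounds elsewhere, so no separate treatment is needed.
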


Putting these two results together yields the following. 

\begin{lemma}
 Let $X$ be a Banach space and $f:X \to {\bf R} \cup \{ +\infty \}$ be a proper, convex, lower semicontinuous function such that $\inf f > -\infty$. Then there exists a sequence $\{ x_n\}$ in $X$ such that $f(x_n) \to \inf f$ and $d(0, \partial f(x_n)) \to 0$, i.e., there exists $\{ g_n\}$ such that $g_n \in \partial f(x_n)$ for each $n$ and $\Vert g_n \Vert \to 0$. 
\end{lemma}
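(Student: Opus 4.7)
The plan is to combine the two earlier results, with Borwein's theorem providing the cleanest route. First, I would fix a sequence $\varepsilon_n \downarrow 0$. Since $\inf f > -\infty$ and $f$ is proper, for each $n$ I can select $x_n^0 \in \mbox{dom } f$ with
\[
\inf f \le f(x_n^0) \le \inf f + \varepsilon_n.
\]
By the characterization recorded in the note just after the definition of the $\varepsilon$-subdifferential, this inequality is equivalent to $0 \in \partial_{\varepsilon_n} f(x_n^0)$, so I have a natural choice for the ``starting pair'' in Borwein's theorem.

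Next, I would apply Theorem 4 (Borwein) with $x_0 = x_n^0$, $x_0^* = 0$, and fixed $k = 1$. Since $0 \in \partial_{\varepsilon_n} f(x_n^0)$, the hypothesis is met, and the theorem produces $x_n := x_{\varepsilon_n}$ and $g_n := x_{\varepsilon_n}^* \in \partial f(x_n)$ satisfying the estimates
\begin{align*}
\Vert g_n \Vert &= \Vert x_{\varepsilon_n}^* - 0 \Vert \le \sqrt{\varepsilon_n}\,(1 + k\cdot \Vert 0 \Vert) = \sqrt{\varepsilon_n}, \\
|f(x_n) - f(x_n^0)| &\le \sqrt{\varepsilon_n}\,(\sqrt{\varepsilon_n} + 1).
\end{align*}
Letting $n \to \infty$, the first line yields $\Vert g_n \Vert \to 0$, i.e.\ $d(0, \partial f(x_n)) \to 0$. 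For the values, since $f(x_n^0) \to \inf f$ and the second bound forces $|f(x_n) - f(x_n^0)| \to 0$, one concludes $f(x_n) \to \inf f$, as desired.

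There is really no substantive obstacle here; the Banach space and proper-convex-lsc hypotheses match Borwein's assumptions exactly, and the proof amounts to choosing the right initial data. An alternative route would invoke part (iv) of Ekeland's Variational Principle directly with $\beta = 1$ and starting point $u_n$ satisfying $F(u_n) \le \inf F + \varepsilon_n$, producing $v_n$ and $g_n \in \partial F(v_n)$ with $\Vert g_n \Vert \le \varepsilon_n$ and $F(v_n) \le F(u_n)$; the two approaches are interchangeable, and the Borwein route is preferred only because it makes the control of both $f(x_n)$ and $\Vert g_n \Vert$ transparent from a single application.
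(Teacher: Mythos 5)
Your proposal is correct and follows essentially the same route as the paper's proof: pick near-minimizers $x_n^0$, note $0\in\partial_{\varepsilon_n}f(x_n^0)$, and apply Borwein's theorem with $k=1$ to get $g_n\in\partial f(x_n)$ with $\Vert g_n\Vert\le\sqrt{\varepsilon_n}$ and $f(x_n)\to\inf f$. The only difference is cosmetic: the paper specializes to $\varepsilon_n=1/(4n^2)$ to make the resulting bounds explicit, whereas you keep a general sequence $\varepsilon_n\downarrow 0$.
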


\begin{proof}
For each $n\in {\bf N}$ there exists $x_n \in X$ such that 
\[
\inf f \leq f(x_n) \leq \inf f + \frac{1}{4n^2}
\]
Then $0\in \partial_{\frac{1}{4n^2}} f(x_n)$ for each $n$. 

By Borwein (1982, Theorem 1), for each $n$ there exist $\bar x_n$ and $\bar x_n^*$ such that (with $k=1$ here)
\begin{eqnarray*}
\bar x_n^* &\in & \partial f(\bar x_n)\\
\Vert \bar x_n - x_n \Vert &\leq& \frac{1}{2n}\\
\Vert f(\bar x_n) - f(x_n)\Vert &\leq & \frac{1}{2n} (\frac{1}{2n} + 1)\\
&\leq & \frac{1}{4n^2} + \frac{1}{2n} \leq \frac{1}{n}\\
\Vert \bar x_n^* - 0\Vert = \Vert \bar x_n^* \Vert &\leq & \frac{1}{2n}(1+ 0) = \frac{1}{2n}\\
\Vert \bar x_n^*(h)\Vert &\leq & \frac{1}{2n} \Vert h\Vert
\end{eqnarray*}
Then for each $n$,
\begin{eqnarray*}
\inf f \leq f(\bar x_n) &\leq& f(x_n) + \frac{1}{n}\\
&\leq& \inf f + \frac{1}{4n^2}+ \frac{1}{n}\\
&\leq & \inf f + \frac{2}{n}
\end{eqnarray*}
and 
\[
\bar x_n^* \in \partial f(\bar x_n) \mbox{ with } \Vert \bar x_n^* \Vert \leq \frac{1}{2n}
\]
Thus $f(\bar x_n) \to \inf f$ and $\Vert \bar x_n^*\Vert \to 0$. 
\end{proof}



\begin{lemma}
Let $B$ be a compact metric space and $X= C(B)$. Let $f:B\times X \to {\bf R}$ be continuous, and  for each $b\in B$, let $f_b:X\to {\bf R}$ be given by $f_b(x) = f(b,x)$. Suppose $f_b$ is convex for each $b\in B$. Let $h:X\to {\bf R}$ be given by 
\[
h(x) = \int f_b(x) \mu(db)
\]
where $\mu \in {\cal M}(B)$ and $\mu\geq 0$. Then $h$ is convex, and 
\[
\partial h(x) = \int \partial f_b(x) \mu(db)
\]
That is, for each $\gamma \in \partial h(x)$ there is a mapping $b\mapsto \gamma_b$ such that $\gamma_b \in \partial f_b(x)$ for $\mu$-$\mbox{ a.e } b\in B$ and 
\[
\gamma\cdot y = \int \gamma_b \cdot y \ \mu(db)
\]
for any measurable $y$. 
\end{lemma}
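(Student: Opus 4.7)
The plan is to establish convexity of $h$ and the two inclusions making up $\partial h(x) = \int \partial f_b(x)\,\mu(db)$ in turn. Convexity of $h$ is immediate: for $x_1, x_2 \in X$ and $\alpha \in [0,1]$, convexity of each $f_b$ gives $f_b(\alpha x_1 + (1-\alpha) x_2) \leq \alpha f_b(x_1) + (1-\alpha) f_b(x_2)$, and integrating against the nonnegative measure $\mu$ yields convexity of $h$. For the easy inclusion $\int \partial f_b(x)\,\mu(db) \subseteq \partial h(x)$, take a measurable selection $b \mapsto \gamma_b$ with $\gamma_b \in \partial f_b(x)$ for $\mu$-a.e.\ $b$, and define $\gamma$ as the measure with $\gamma(E) = \int \gamma_b(E)\,\mu(db)$; a standard monotone-class argument then gives $\gamma\cdot y = \int \gamma_b\cdot y\,\mu(db)$ for any bounded measurable $y$. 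Integrating the pointwise subgradient inequality $f_b(y) - f_b(x) \geq \gamma_b\cdot(y-x)$ against $\mu$ yields $h(y) - h(x) \geq \gamma\cdot(y-x)$ for $y\in X$, so $\gamma \in \partial h(x)$.

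The reverse inclusion is the substantive direction and parallels the classical theory of integral functionals due to Rockafellar. The plan is to use Fenchel--Legendre duality: $\gamma \in \partial h(x)$ is equivalent to the Fenchel equality $h(x) + h^*(\gamma) = \gamma\cdot x$, so it suffices to compute $h^*$. Mimicking the interchange theorem for integrals of normal convex integrands, one establishes
\[
h^*(\gamma) = \inf\left\{ \int f_b^*(\gamma_b)\,\mu(db) : b\mapsto\gamma_b \text{ measurable, } \int \gamma_b\,\mu(db)=\gamma\right\}
\]
with the infimum attained. Any minimizing selection $b\mapsto \gamma_b$ must then satisfy the pointwise Fenchel equality $f_b^*(\gamma_b) + f_b(x) = \gamma_b\cdot x$ for $\mu$-a.e.\ $b$; otherwise, the strict inequality $f_b^*(\gamma_b) + f_b(x) > \gamma_b\cdot x$ on a set of positive $\mu$-measure, integrated against $\mu$, would contradict the global Fenchel equality. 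The pointwise Fenchel equality is equivalent to $\gamma_b \in \partial f_b(x)$, giving the required representation.

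The main obstacle is the measurable-selection step undergirding the interchange formula: one must verify that $b\mapsto \partial f_b(x) \subseteq {\cal M}(B)$ is a measurable set-valued map admitting integrable selections whose aggregate integral against $\mu$ equals $\gamma$. Continuity of $f$ on the compact metric space $B \times X$ and separability of $C(B)$ (from compactness of $B$) provide enough regularity to invoke standard measurable-selection results (e.g., Kuratowski--Ryll-Nardzewski), and compactness of $B$ delivers integrability. Once this selection apparatus is in place, the duality argument above reduces the lemma to a pointwise convex-analytic identity applied $\mu$-a.e.
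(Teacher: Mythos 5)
Your argument is correct and is in substance the same as the paper's: the paper's proof consists of citing Ioffe and Levin (1972) and Clarke (1990, Theorem~2.7.2) for precisely the conjugate-interchange and measurable-selection machinery you sketch, with separability of $C(B)$ (following from compactness of $B$) being the hypothesis that makes that machinery applicable. The only difference is that you unpack the standard proof of the cited theorem---Fenchel duality plus a measurable selection from $b\mapsto\partial f_b(x)$---rather than invoking it as a black box.
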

\begin{proof}
Since $B$ is a compact metric space, $X=C(B)$ is separable. The result then follows from Ioffe and Levin (1972); see also Clarke (1990) Theorem 2.7.2 and discussion on pp. 76-77.
\end{proof}

\begin{lemma}
Let $f:T\times C(T\times S) \to {\bf R}$ be given by 
\[
f(t,z) =  \pi(t) \cdot z(t)
\]
and for each $t\in T$, let $f_t:C(T\times S) \to {\bf R}$ be given by $f_t(z) = f(t,z)$. 
Then $f$ is continuous and $f_t$ is convex for each $t\in T$. For each $t\in T$,  if 
$\gamma \in \partial f_t(z)$ then $\gamma \in {\cal M}(T\times S)$ is a measure such that for any measurable function $y$,
\[
 \gamma \cdot y = \pi(t) \cdot y(t) 
\]
\end{lemma}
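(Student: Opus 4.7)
The plan is to verify the three claims in turn: joint continuity of $f$, convexity of $f_t$, and the Riesz-type identification of any $\gamma \in \partial f_t(z)$. The last piece is essentially immediate once one notices that $f_t$ is a continuous linear functional, so the main content lies in the continuity argument.

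For joint continuity, I would fix $(t_0, z_0) \in T \times C(T\times S)$ and consider $(t_n, z_n) \to (t_0, z_0)$, the second convergence in the sup norm on $T\times S$. The decomposition
\[
|f(t_n, z_n) - f(t_0, z_0)| \leq |\pi(t_n)\cdot(z_n(t_n) - z_0(t_n))| + |\pi(t_n)\cdot z_0(t_n) - \pi(t_0)\cdot z_0(t_0)|
\]
bounds the first term by $\|z_n - z_0\|_\infty$ since $\pi(t_n)$ is a probability measure. For the second term, the joint continuity of $z_0$ on the compact set $T\times S$ makes $t \mapsto z_0(t)$ uniformly continuous from $T$ into $C(S)$, so $z_0(t_n) \to z_0(t_0)$ uniformly on $S$; combining this with norm continuity of $\pi$ via a standard $\pi(t_n)\cdot[z_0(t_n) - z_0(t_0)] + [\pi(t_n) - \pi(t_0)]\cdot z_0(t_0)$ split yields convergence to $0$.

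Convexity of $f_t$ is immediate because $f_t(z) = \int z(t,s)\,\pi(t)(ds)$ is linear in $z$; in fact this means $f_t$ is a continuous linear functional on the Banach space $C(T\times S)$. Consequently $\partial f_t(z) = \{f_t\}$ as an element of the dual, and the Riesz representation theorem identifies $C(T\times S)^*$ with $\mathcal{M}(T\times S)$. To pin down the measure $\gamma$ representing $f_t$, I would take $\gamma = \delta_t \otimes \pi(t)$, where $\delta_t \in \mathcal{M}(T)$ is the Dirac at $t$; then for any measurable $y$,
\[
\gamma \cdot y = \iint y(t',s)\,\pi(t)(ds)\,\delta_t(dt') = \int y(t,s)\,\pi(t)(ds) = \pi(t)\cdot y(t),
\]
which matches $f_t$ on $C(T\times S)$ and so is the unique representative.

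The main obstacle, such as it is, is the careful bookkeeping in the joint continuity calculation, because one must use both the norm continuity of $\pi$ and the uniform continuity of $z_0$ along the $T$-direction; everything else (convexity and the identification of $\gamma$) is a direct consequence of linearity and Riesz representation.
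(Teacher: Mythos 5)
Your proposal is correct, and its overall architecture (continuity by a telescoping decomposition, convexity from linearity, and identification of the subgradient via Riesz representation) matches the paper's. The one genuine difference is in the continuity step: the paper splits $\pi(t_n)\cdot z_n(t_n)-\pi(t_0)\cdot z_0(t_0)$ as $(\pi(t_n)-\pi(t_0))\cdot z_n(t_n)+\pi(t_0)\cdot(z_n(t_n)-z_0(t_0))$ and disposes of the second term with the bounded convergence theorem, whereas you first peel off $z_n-z_0$ (bounded by $\Vert z_n-z_0\Vert_\infty$ since $\pi(t_n)$ is a probability measure) and then handle the fixed function $z_0$ via its uniform continuity on the compact set $T\times S$. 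Your route is slightly more elementary --- it replaces the bounded convergence theorem with uniform continuity, and every bound is an explicit sup-norm estimate --- at the cost of one extra split. You are also more explicit than the paper about the representing measure, exhibiting it as the product $\delta_t\otimes\pi(t)$ and checking the formula $\gamma\cdot y=\pi(t)\cdot y(t)$ directly, where the paper simply asserts the identification; this added precision is a small improvement, since it makes clear why the formula extends from continuous test functions to arbitrary bounded measurable $y$.
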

\begin{proof}
First, $f$ is continuous, by construction. 
To see this, let $(s_n,z_n)\to (t,z)$. Then 
\begin{eqnarray*}
\Vert \pi(s_n)\cdot z_n(s_n)-\pi(t)\cdot z(t) \Vert &=& \Vert (\pi(s_n) - \pi(t) ) \cdot z_n(s_n) + \pi(t)\cdot (z_n(s_n) -z(t) ) \Vert\\
&\leq& \Vert  (\pi(s_n) - \pi(t) ) \cdot z_n(s_n)\Vert + \Vert \pi(t)\cdot (z_n(s_n) -z(t) ) \Vert\\
&\leq& \Vert \pi(s_n) - \pi(t) \Vert \Vert z_n(s_n)\Vert + \Vert \pi(t)\cdot (z_n(s_n) -z(t) ) \Vert\\
\end{eqnarray*}
Since $z_n\to z$, $\{ z_n(s_n)\} $ and $\{ z_n(s_n) - z(t) \}$ are bounded, and $z_n(s_n)\to z(t)$ pointwise. Then $\Vert \pi(s_n) - \pi(t) \Vert \Vert z_n(s_n)\Vert\to 0$, since $\pi(s_n)\to \pi(t)$ in norm, and $\pi(t)\cdot (z_n(s_n) -z(t) ) \to 0$ by the bounded convergence theorem. Thus $f(s_n, z_n) \to f(t, z)$. 

By construction, $f_t$ is linear and continuous for each $t\in T$, and 
$\partial f_t(z)$ is the measure $\gamma_t\in {\cal M}(T\times S)$ such that 
\[
\gamma_t\cdot y = \pi(t)\cdot y(t) \ \ \mbox{ for $y$ measurable function }
\]
\end{proof}

\begin{lemma}
Let $g:T\times T\times C(T\times S)\to {\bf R}$ be given by 
\[
g_{(s,t)}(z) := g(s,t,z) =  \pi(s) \cdot z(t)
\]
Then $g$ is continuous, and $g_{(s,t)}$ is concave for each $t,s\in T$. For each $t,s\in T$, if $\gamma \in \partial g_{(s,t)}(z)$ then $\gamma \in {\cal M}(T\times S)$ is a measure such that for any measurable function $y$,
\[
 \gamma \cdot y = \pi(s) \cdot y(t) 
\]
\end{lemma}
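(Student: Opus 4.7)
The plan is to follow the proof of Lemma 5 essentially verbatim, exploiting the fact that $g$ has exactly the same linear-in-$z$ structure as $f$, only with the evaluation point of $\pi$ decoupled from the evaluation point of $z$.

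First I would establish continuity of $g$ on $T\times T\times C(T\times S)$. Take $(s_n,t_n,z_n)\to (s,t,z)$. By adding and subtracting $\pi(s)\cdot z_n(t_n)$, I can bound
\[
\lVert \pi(s_n)\cdot z_n(t_n)-\pi(s)\cdot z(t)\rVert \leq \lVert \pi(s_n)-\pi(s)\rVert \,\lVert z_n(t_n)\rVert + \lVert \pi(s)\cdot(z_n(t_n)-z(t))\rVert .
\]
The first summand vanishes because $\pi$ is norm continuous and $\{z_n(t_n)\}$ is bounded (being convergent in $C(T\times S)$ composed with a convergent sequence in $T$), and the second summand vanishes by the bounded convergence theorem applied to the pointwise convergence $z_n(t_n)\to z(t)$ in $C(S)$. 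This is exactly the argument used for $f$ in Lemma 5, with $\pi(s_n)$ in place of $\pi(t_n)$.

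For concavity, note that for each fixed $(s,t)$, the map $z\mapsto \pi(s)\cdot z(t)=\int z(t,b)\,\pi(s)(db)$ is \emph{linear} in $z$, hence both convex and concave. In particular the Fenchel subdifferential (equivalently superdifferential, since the function is affine) at every point consists of the single continuous linear functional that defines $g_{(s,t)}$ itself. That functional corresponds to the signed measure $\gamma\in\mathcal{M}(T\times S)$ acting on a measurable $y$ by
\[
\gamma\cdot y = \pi(s)\cdot y(t),
\]
namely the product $\delta_t\otimes \pi(s)$ viewed as an element of $\mathcal{M}(T\times S)$.

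There is no real obstacle here: the only thing to keep track of is that because $g_{(s,t)}$ is linear (not merely convex or concave), its subdifferential is a singleton consisting of its unique continuous linear extension, so the characterization of $\gamma$ is forced. The lemma is essentially the companion of Lemma 5 and is used in the proof of Theorem 9 to identify $\gamma_n$ as the measure $\int \pi(t)\cdot y(t)\,\lambda(dt)-\iint \pi(s)\cdot y(t)\,\nu(ds\,dt)$ arising in the Ekeland sequence.
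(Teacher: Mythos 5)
Your proposal is correct and matches the paper's proof, which simply states that the result ``follows from arguments analogous to those used in the proof of Lemma 5''; you have carried out exactly that analogy, replacing $\pi(t_n)$ by $\pi(s_n)$ in the continuity estimate and identifying the subdifferential of the linear functional $z\mapsto \pi(s)\cdot z(t)$ as the singleton $\{\delta_t\otimes\pi(s)\}$. Nothing further is needed.
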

\begin{proof}
This follows from arguments analogous to those used in the proof of Lemma 6. 
\end{proof}

\begin{lemma}
Let $\lambda \in {\cal M}(T)$ and $\nu \in {\cal M}(T\times T)$. Let $d\in C(T\times T)$ be given by $d(t,s) = v(t) - v(s)$. Let $F:C(T\times S) \to {\bf R}$ be given by 
\[
F(z) := \lambda\cdot f(z) + \nu \cdot(d-g(z))
\]
Then $F$ is convex and continuous. If $\gamma\in \partial F(z)$, then $\gamma \in {\cal M}(T\times S)$ is a measure for which 
\[
\gamma\cdot y = \int \pi(t) \cdot y(t) \lambda (dt) - \iint \pi(s) \cdot y(t) \nu (ds\ dt)
\]
for any measurable function $y$. 
\end{lemma}
\begin{proof}
This follows from Lemmas 5, 6, and 7. 
\end{proof}

\begin{lemma}
Let $T$ be a compact metric space. For each $v:T\to {\bf R}$, if $c\in C(T\times S)$ satisfies weak full extraction, then there exists $c' \in C(T\times S)$ satisfying full extraction. 
\end{lemma}
\begin{proof}
Fix $v:T\to {\bf R}$ and suppose $c\in C(T\times S)$ satisfies weak full extraction. By assumption, for each $t\in T$
\begin{eqnarray*}
v(t) - \pi(t) \cdot c(t) &\geq& 0 \\
v(t) - \pi(t)\cdot c(s) &\leq & 0 \ \ \ \forall s\not= t
\end{eqnarray*}
For each $t\in T$, set 
\[
w(t):= v(t) - \pi(t)\cdot c(t)
\]
Then $w:T\to {\bf R}$ is continuous, and $w(t) \geq 0$ for each $t$. Now for each $t\in T$ set
\[
c'(t) = c(t) + w(t) {\bf 1}(S)
\]
Then $c'\in C(T\times S)$, and for each $t\in T$,
\[
v(t) - \pi(t)\cdot c'(t) = 0
\]
while for $s\not= t$, 
\[
v(t)-\pi(t)\cdot c'(s) = v(t) - \pi(t) \cdot c(s) - w(s) \leq v(t) -\pi(t)\cdot c(s) \leq 0
\]
Thus $c'$ satisfies full extraction.
\end{proof}

\noindent {\bf Proof of Theorem 4:} 
Let $x \in C$ be an extreme point. If $x$ is an exposed point of $C$, then it is also eventually exposed. So suppose $x$ is not exposed. 
Since $x$ is an extreme point of $C$, there exists $z_1\in {\bf R}^k$ and $b_1\in {\bf R}$ such that
\begin{eqnarray*}
x\cdot z_1 &=& b_1\\
y\cdot z_1 &\geq & b_1 \ \forall y\in C\\
y\cdot z_1 &>& b_1 \ \mbox{ for some } y\in C
\end{eqnarray*}

Set 
\[
F_1 = \{ y\in C: y\cdot z_1 = b_1\}
\]
Since $x$ is not an exposed point of $C$, $F_1$ is a proper face of $C$, so $\mbox{ dim } F_1 < \mbox{ dim } C$,  and $\mbox{ dim } F_1 \geq 1$. 

If $x$ is an exposed point of $F_1$, we are done, setting $F_2 = \{x\}$. Else, $x$ must be an extreme point of $F_1$ (since $F_1 \subseteq C$) that is not exposed. Repeating the above argument, choose $z_2 \in {\bf R}^k$ and $b_2\in {\bf R}$ such that 
\begin{eqnarray*}
x\cdot z_2 &=& b_2\\
y\cdot z_2 &\geq & b_2 \ \forall y\in F_1\\
y\cdot z_2 &>& b_2 \ \mbox{ for some } y\in F_1
\end{eqnarray*}
Set 
\[
F_2 = \{ y\in F_1: y\cdot z_2 = b_2\}
\]
If $x$ is an exposed point of $F_2$, we are done. Else, $F_2$ is a proper face of $F_1$ and
\[
1 \leq \mbox{ dim } F_2 < \mbox{ dim } F_1 \leq \mbox{ dim } C -1
\]
Repeating this argument, since $\mbox{ dim } F_i \leq \mbox{ dim } F_{i-1} -1$ for each $i$, eventually must have $\mbox{ dim } F_n = 1$, and then because $x$ is an extreme point of $F_n$, $x$ must also be an exposed point of $F_n$. \qed

\end{document}